\newcommand{\lref}[2][]{\hyperref[#2]{#1~\ref*{#2}}}
\renewcommand{\eqref}[2][]{\hyperref[#2]{(\ref*{#2})}}
\newtheorem{definition}{Definition}
\newtheorem{lemma}{Lemma}
\newtheorem{theorem}{Theorem}
\newtheorem{observation}{Observation}
\newtheorem{Reduction Rule}{Reduction Rule}
\newtheorem{step}{Step}
\newcommand{\mc}{\textsc{Mixed Cut}}
\newcommand{\mmcufull}{\textsc{Mixed Multiway Cut-Uncut}}
\newcommand{\bmmcufull}{\textsc{Border-Mixed Multiway Cut-Uncut}}
\newcommand{\mmcu}{\textsc{MMCU}}
\newcommand{\bmmcu}{\textsc{B-MMCU}}
\newcommand{\bpvcfull}{{\sc Bipartite Partial Vertex Cover}}
\newcommand{\bpvc}{\textsc{BPVC}}
\newcommand{\R}{\mathcal{R}}
\newcommand{\I}{\mathcal{I}}
\newcommand{\X}{\mathcal{X}}
\newcommand{\Z}{\mathcal{Z}}
\newcommand{\CS}{\mathcal{L}}
\newcommand{\PP}{\mathcal{P}}
\newcommand{\F}{\mathcal{F}}
\newcommand{\V}{V^\infty}
\newcommand{\sol}{{\sf sol}}
\newcommand{\bigs}{{\sf big}}
\newcommand{\No}{{\sc No}}
\newcommand{\Yes}{{\sc Yes}}
\newcommand{\Oh}{{\mathcal{O}}}
\newcommand{\name}[1]{\textsc{#1}}
\newcommand{\parnamedefn}[4]{
\begin{center}
\begin{boxedminipage}{1\textwidth}
\begin{tabbing}
\name{#1} \\
\emph{Input:} \hspace{1cm} \= \parbox[t]{12cm}{#2} \\
\emph{Parameters:}            \> \parbox[t]{12cm}{#3} \\
\emph{Question:}             \> \parbox[t]{12cm}{#4} \\
\end{tabbing}
\vspace{-0.5cm}
\end{boxedminipage}
\end{center}
}
\newcommand{\probnamedefn}[3]{
\begin{center}
\begin{boxedminipage}{1\textwidth}
\begin{tabbing}
\name{#1} \\
\emph{Input:} \hspace{1cm} \= \parbox[t]{12cm}{#2} \\
\emph{Output:}             \> \parbox[t]{12cm}{#3} \\
\end{tabbing}
\vspace{-0.5cm}
\end{boxedminipage}
\end{center}
}
\title{A Parameterized Algorithm for {\sc Mixed-Cut}}
\author{Ashutosh Rai\thanks{Institute of Mathematical Sciences, Chennai, India. Emails: 
\texttt{\{ashutosh|saket\}@imsc.res.in}}  \and M. S. Ramanujan\thanks{University of Bergen, Norway. Email: \texttt{M.Ramanujan@uib.no}}
\and Saket Saurabh\addtocounter{footnote}{-2}\footnotemark
}
\begin{document}
\maketitle

\thispagestyle{empty}

%

%
\begin{abstract}
The classical Menger's theorem states that in any undirected (or directed) graph $G$, given a pair of vertices $s$ and $t$, the maximum number of vertex (edge) disjoint paths is equal to the minimum number of vertices (edges) needed to disconnect from $s$ and $t$. This min-max result can be turned into a polynomial time algorithm to find the maximum number of vertex (edge) disjoint paths  as well as the minimum number of vertices (edges) needed to disconnect $s$ from $t$. In this paper we study a mixed version of this problem, called  {\sc Mixed-Cut}, where we are given an undirected graph $G$, vertices $s$ and $t$, positive integers $k$ and $l$ and the objective is to test whether there exist 
a  $k$ sized vertex set $S\subseteq V(G)$ and an $l$ sized edge set $F\subseteq E(G)$ such that deletion of $S$ 
and $F$ from $G$ disconnects from $s$ and $t$. We start with a small observation that this problem is NP-complete 
and then study this problem, in fact a much stronger generalization of this, in the realm of parameterized complexity. In particular we study the \mmcufull\ problem where along with a set of terminals $T$, we are also given an equivalence relation $\R$ on $T$, and the question is whether we can delete at most $k$ vertices and at most $l$ edges such that connectivity of the terminals in the resulting graph respects $\R$. Our main results is a fixed parameter algorithm for \mmcufull\ using the method of recursive understanding introduced by Chitnis et al. (FOCS 2012).
\end{abstract}


\newpage 
\setcounter{page}{1}

\section{Introduction}

Given a graph, a typical {\it cut problem} asks for finding a set of vertices or edges such that their removal from the graph makes the graph satisfy some separation property. The most fundamental version of the cut problems is \textsc{Minimum Cut}, where given a graph and two vertices, called {\it terminals}, we are asked to find the minimum sized subset of vertices (or edges) of the graph such that deleting them separates the terminals. 
Minimum cut is can be applied to several problems, which at first do not look like cut problems. Examples include well studied problems such as \textsc{Feedback Vertex Set}~\cite{ChenLLOR08} and \textsc{Odd Cycle Transversal}~\cite{ReedSV04}.
Even though \textsc{Minimum Cut} is known to be in polynomial time for both edge and vertex versions, it becomes NP-hard for many of its generalizations. 

Two of the most studied generalizations of \textsc{Minimum Cut} problem which are NP-hard are \textsc{Multiway Cut} and \textsc{Multicut}. In the \textsc{Multiway Cut} problem, we are given a set of terminals, and we are asked to delete minimum number of vertices (or edges) to separate the terminals from each other. This problem is known to be NP-hard when the number of terminals is at least three. In the \textsc{Multicut} problem, given pairs of terminals, we are asked to delete minimum number of vertices (or edges) so that it separates all the given terminal pairs. The \textsc{Multicut} problem is known to be NP-hard when the number of pairs of terminals is at least three. 

These problems are one of the most well studied problems in all those algorithmic paradigms that are meant for coping with NP-hardness, such as approximation algorithms~\cite{AAC07,ENSS98,GVY96,GVY04,KKSTY04,RaviS02}. The aim of this paper is to look at a generalization of \textsc{Minimum Cut} problem in the realm of parameterized complexity. The field of parameterized complexity tries to provide efficient algorithms for  NP-complete problems by going from the classical view of single-variate measure of the running time to a multi-variate one. It aims at getting algorithms of running time $f(k)n^{\Oh(1)}$, where $k$ is an integer measuring some aspect of the problem. These algorithms are called fixed parameter tractable (FPT) algorithms and the integer $k$  is called the {\em parameter}. In most of the cases, the solution size is taken to be the parameter, which means that this approach gives faster algorithms when the solution is of small size. For more background on parameterized complexity, the reader is referred 
to the monographs \cite{ParameterizedComplexityBook,FG06,RN}.

Cut problems were looked at under the realm of Parameterized Complexity by Marx~\cite{Marx06} for the first time, who showed that \textsc{Multiway Cut} is FPT when parameterized by the solution size and \textsc{Multicut} is FPT when parameterized by the solution size plus the number of terminals. Subsequently, a lot of work has been done on cut problems in the field of parameterized complexity~\cite{BDT11,CLL09,CHM13,KT11,KPPW12,MOR13,MR14}. Recently, Chitnis et al.~\cite{CCHPP12} introduced the technique of {\it randomized contractions} and used that to solve the \textsc{Unique Label Cover} problem. They also show that the same techniques can be applied to solve a generalization of \textsc{Multiway Cut} problem, namely \textsc{Multiway Cut-Uncut}, where an equivalence relation $\R$ is also supplied  along with the set of terminals and we are to delete minimum number of vertices (or edges) such that the terminals lie in the same connected of the resulting graph if and only if they lie in the same equivalence class of $\R$.

Another interesting generalization of the \textsc{Minimum Cut} problem is the one where we allow both vertices and edges to be deleted to achieve the desired separation properties. In an instance of \textsc{Mixed Cut} problem, we are asked whether we can separate the two given terminals by deleting at most $k$ vertices and at most $l$ edges. The problem can also be looked at as asking to delete $k$ vertices so that there are at most $l$ edge disjoint paths between the two terminals.

 Even though the vertex and edge versions of \textsc{Minimum Cut} problem are polynomial time solvable, we show that allowing deletion of both, the vertices and the edges, makes the \textsc{Mixed Cut} problem NP-hard. To show that, we use a simple reduction from the \bpvcfull\  problem which was recently shown to be NP-hard~\cite{AS14,JV12}. Hence, \mc\  becomes interesting from the parameterized complexity point of view. 

The \textsc{Mixed Cut} problem can also be generalized in the same way as \textsc{Minimum Cut} in some cases. In the \textsc{Mixed Multiway Cut} problem, which is an analogue of \textsc{Multiway Cut}, given a set of terminals, we are asked to delete at most $k$ vertices and $l$ edges to separate the terminals from each other. Similarly, we can define \textsc{Mixed Multiway Cut-Uncut (MMCU)} as an analogue of \textsc{Multiway Cut-Uncut} problem where we have to preserve the connectivity between the terminals which belong to the same equivalence class, while separating the terminals which belong to different equivalence classes, by deleting at most $k$ vertices and at most $l$ edges. We defer the formal definition of the problem to the next section. It is easy to see that \mmcu\ not only generalized \textsc{Mixed Cut} and \textsc{Mixed Multiway Cut}, but also both edge and vertex versions of \textsc{Multiway Cut} and \textsc{Multiway Cut-Uncut} problems. 

Studying the parameterized complexity of \mmcu\ is the main focus of this paper. We show that \mmcu\ is FPT when parameterized by the solution size, i.e. $k+l$. By doing so, we show that \textsc{Mixed Cut} is also FPT when parameterized by the solution size.

We use the irrelevant vertex technique introduced by Chitnis et al.~\cite{CCHPP12} to solve the problem. The main observation is that if there is a small vertex separation which divides the graph into big parts, then we can recursively reduce the size of one of the big parts. Otherwise, the graph is highly connected, and the structure of the graph can be exploited to obtain a solution. The techniques and the presentation of the paper are heavily borrowed from that of~\cite{CCHPP12}, but we do need many technical modifications to make the algorithm work for \mmcu. Allowing edge deletions makes us modify many of the initial operations on the graph. The algorithm in~\cite{CCHPP12} returns the minimum solutions in the recursive steps. Since we allow both edge and vertex deletion, there is no clear ordering on the solutions, and hence we need to look for solutions of all possible sizes while making the recursive call.

\section{Preliminaries}\label{Preliminaries}

In this section, we first give the notations and definitions which are used in the paper. Then we state some basic properties of mixed-cuts and some known results which will be used later in the paper. 

\noindent
{\bf Notations and Definitions:} For a graph $G$, we denote the set of vertices of the graph by $V(G)$ and the set of edges of the graph by $E(G)$. We denote $|V(G)|$ and $|E(G)|$ by $n$ and $m$ respectively, where the graph is clear from context.
For a set $S\subseteq V(G)$, the {\it subgraph of $G$ induced by $S$} is denoted by $G[S]$ and it is defined as the subgraph of $G$ with vertex set $S$ and edge set $\{(u,v) \in E(G) :u,v\in S\}$ and the subgraph obtained after deleting $S$ is denoted as $G- S$. For $F \subseteq E(G)$, by $V(F)$ we denote the set $\{v \ | \ \exists u \text{ such that } uv \in F\}$.  For a set $Z=V' \cup E'$ where $V' \subseteq V(G)$ and $E' \subseteq E(G)$, by $G(Z)$ we denote the subgraph $G'=(V' \cup V(E'),E')$. For a tuple $\X = (X,F)$ such that $X \subseteq V(G)$ and $F \subseteq E(G)$, by $G-\X$ we denote the graph $G' = (V(G) \setminus X, E(G) \setminus F)$ and by $V(\X)$ we denote the vertex set $X \cup V(G(E))$. All vertices adjacent to a vertex $v$ are called neighbours of $v$ and the set of all such vertices is called {\it open} neighbourhood of $v$, denoted by $N_G(v)$. For a set of vertices $S \subseteq V(G)$, we define $N_G(S) = (\cup_{v \in S} N(v)) \setminus S$. We drop the subscript $G$ when the graph is 
clear from the context. 

We define the \textsc{Mixed Cut} and \mmcufull\ problems as follows.  

\parnamedefn{\mc}{A multigraph $G$, vertices $s,t \in V(G)$, integers $k$ and $l$.}{$k,l$}{Does there exist $X \subseteq V(G)$ and $F \subseteq E(G)$ such that $|X|\leq k$, $|F| \leq l$ and $s$ and $t$ are in different connected components of $G-(X,F)$?}

\parnamedefn{\mmcufull\ (\mmcu)}{A multigraph $G$, a set of terminals $T \subseteq V(G)$, and equivalence relation $\R$ on the set $T$ and integers $k$ and $l$.}{$k,l$}{Does there exist $X \subseteq (V(G) \setminus T)$ and $F \subseteq E(G)$ such that $|X|\leq k$, $|F| \leq l$ and for all $u,v \in T$, $u$ and $v$ belong to the same connected component of $G-(X,F)$ if and only if $(u,v) \in \R$?}

%

We say that a tuple $\X = (X,F)$, where $X \subseteq V(G) \setminus T$ and $F \subseteq E(G)$, is a solution to a \mmcu\ instance $\I = (G, T, \R, k, l)$ if $|X|\leq k$, $|F| \leq l$ and for all $u,v \in T$, $u$ and $v$ belong to the same connected component of $G-(X,F)$ if and only if $(u,v) \in \R$. We define a partial order on the solutions of the instance $\I$. For two solutions $\X=(X,F)$ and $\X'=(X',F')$ of a \mmcu\ instance $\I$, we say that $\X' \leq \X$ if $X' \subseteq X$ and $F' \subseteq F$. We say that a solution $\X$ to an \mmcu\ instance $\I$ is {\it minimal} if there does not exist another solution $\X'$ to $\I$ such that $\X' \neq \X$ and $X' \leq \X$. For a solution $\X=(X,F)$ of an \mmcu\ instance $\I = (G, T, \R, k, l)$ and $v \subseteq V(G)$, we say that $\X$ {\it affects} $v$ if either $v \in X$ or there exists $u \in V(G)$ such that $uv \in F$.
\begin{observation}\label{obs}
 If $\X = (X,F)$ is a minimal solution to a \mmcu\ instance $\I = (G, T, \R, k, l)$, then none of the edges in $F$ are incident to $X$.
\end{observation}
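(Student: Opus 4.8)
The plan is to argue by contradiction using the minimality of $\X$. Suppose, for contradiction, that there is an edge $e = uv \in F$ with (say) $u \in X$. Consider the tuple $\X' = (X, F \setminus \{e\})$. I will show that $\X'$ is again a solution to $\I$, that $\X' \leq \X$, and that $\X' \neq \X$, which contradicts the assumed minimality of $\X$.

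The only point that needs a little care is to check that removing $e$ from the edge-deletion set does not change the graph $G - \X'$ in any way that matters: since $u \in X$, the vertex $u$ is absent from $V(G) \setminus X$, so the edge $e = uv$ cannot appear in the subgraph on vertex set $V(G)\setminus X$ regardless of whether it is placed in the deletion set. Hence $G - \X' = G - \X$ as graphs. Consequently the ``if and only if'' connectivity condition for every pair $u',v' \in T$ holds for $\X'$ exactly because it holds for $\X$. Moreover $X \subseteq V(G) \setminus T$ is unchanged, $|X| \leq k$ is unchanged, and $|F \setminus \{e\}| \leq |F| \leq l$, so all the numerical and placement constraints of a solution are met; thus $\X'$ is a solution to $\I$.

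Finally, $X \subseteq X$ and $F \setminus \{e\} \subseteq F$ give $\X' \leq \X$, while $e \in F \setminus (F \setminus \{e\})$ gives $\X' \neq \X$. This contradicts the minimality of $\X$. Since the case $v \in X$ is symmetric, no edge of $F$ is incident to $X$, completing the proof.

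I do not expect any substantive obstacle here; the single subtle step is the observation that an edge with an endpoint in $X$ is effectively already ``deleted'' by the vertex deletions, so shrinking $F$ by such an edge leaves $G - \X$ untouched.
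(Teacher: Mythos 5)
Your proof is correct, and it is exactly the natural argument the paper leaves implicit (the observation is stated without proof). The one cosmetic point worth noting: the paper defines $G-\X = (V(G)\setminus X, E(G)\setminus F)$, so strictly speaking one should read this as the graph on vertex set $V(G)\setminus X$ with only those edges of $E(G)\setminus F$ having both endpoints in $V(G)\setminus X$; under that reading your key step $G-\X' = G-\X$ holds precisely as you say, since $e=uv$ with $u\in X$ is excluded from both graphs whether or not it lies in the edge-deletion set.
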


\begin{lemma}[\cite{CCHPP12}]\label{lemma:setfamily}
Given a set $U$ of size $n$ together with integers $0 \leq a, b \leq n$, one can in $\Oh(2^{\Oh(\min(a,b) \log(a+b))} n \log n)$ time construct a family $\F$ of at most $\Oh(2^{\Oh(\min(a,b) \log(a+b))} \log n)$ subsets of $U$, such that the following holds: for any sets $A, B \subseteq U$, $A \cap B = \emptyset$, $|A| \leq a$, $|B| \leq b$, there exists a set $S \in \F$ with $A \subseteq S$ and $B \cap S = \emptyset$.
\end{lemma}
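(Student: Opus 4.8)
The plan is to combine a perfect hash family -- used to shrink the ground set from size $n$ down to size polynomial in $a+b$ -- with a brute-force enumeration of all small subsets of the shrunken ground set. First I would reduce to the case $a\le b$: if $\min(a,b)=0$ the statement is trivial (one of $A,B$ is always empty, so $\emptyset$, respectively $U$, works), so assume $1\le\min(a,b)$; and if $b<a$, build a family $\F'$ for the parameters $(b,a)$ and output $\F=\{U\setminus S : S\in\F'\}$. Indeed, given disjoint $A,B$ with $|A|\le a$, $|B|\le b$, applying the guarantee of $\F'$ to the pair $(B,A)$ yields $S\in\F'$ with $B\subseteq S$ and $A\cap S=\emptyset$, so $U\setminus S$ contains $A$ and avoids $B$; also $|\F|=|\F'|$. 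Hence from now on $a=\min(a,b)\ge 1$, and the goal is a family of size $2^{\Oh(a\log(a+b))}\log n$.

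Next I would invoke the splitter construction of Naor, Schulman and Srinivasan: for $n\ge a+b$ it gives, in time $(a+b)^{\Oh(1)} n\log n$, a family $\mathcal{H}$ of $(a+b)^{\Oh(1)}\log n$ functions $h:U\to[(a+b)^2]$ such that every $W\subseteq U$ with $|W|\le a+b$ is mapped injectively by some $h\in\mathcal{H}$ (a uniformly random $h$ has collision probability at most $\binom{a+b}{2}/(a+b)^2<\tfrac12$ on a fixed $(a+b)$-subset, so $\Oh((a+b)\log n)$ random functions suffice, and NSS derandomize this with only a $\mathrm{poly}(a+b)$ overhead; extending smaller sets to size $a+b$ shows the same family handles all $W$ with $|W|\le a+b$, and for $n<a+b$ a single injection $U\hookrightarrow[(a+b)^2]$ suffices). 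I would then define $\F$ to consist of the sets $h^{-1}(J)$ over all $h\in\mathcal{H}$ and all $J\subseteq[(a+b)^2]$ with $|J|\le a$. Since the number of admissible $J$ is $\sum_{i=0}^{a}\binom{(a+b)^2}{i}\le 2^{\Oh(a\log(a+b))}$, this gives $|\F|\le|\mathcal{H}|\cdot 2^{\Oh(a\log(a+b))}=2^{\Oh(a\log(a+b))}\log n$, and each $h^{-1}(J)$ is computed by one pass over $U$, so the total time is $2^{\Oh(a\log(a+b))} n\log n$ -- both as claimed, with $a$ read as $\min(a,b)$.

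For correctness, take disjoint $A,B\subseteq U$ with $|A|\le a$ and $|B|\le b$. Then $|A\cup B|\le a+b$, so there is $h\in\mathcal{H}$ injective on $A\cup B$. Set $J:=h(A)$; then $|J|=|A|\le a$, so $S:=h^{-1}(J)$ lies in $\F$. Clearly $A\subseteq S$. Moreover, since $h$ is injective on $A\cup B$ and $A\cap B=\emptyset$, the images $h(A)$ and $h(B)$ are disjoint, so no element of $B$ maps into $J$; hence $B\cap S=\emptyset$, which is exactly what is required.

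The routine parts of this are the complement reduction, the subset enumeration and bookkeeping, and the short correctness check. The one genuinely nontrivial ingredient is the existence and efficient \emph{deterministic} construction of the hash family $\mathcal{H}$ at size $\mathrm{poly}(a+b)\cdot\log n$ and near-linear time -- this is precisely the content of the Naor--Schulman--Srinivasan splitter theorem, and in a self-contained write-up this derandomization (passing from the easy probabilistic bound to an explicit small family) is where essentially all the effort would go.
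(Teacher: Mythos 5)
Your proof is correct and follows essentially the same route as the cited source [CCHPP12]: build an $(n,a+b,(a+b)^2)$-perfect hash family via Naor--Schulman--Srinivasan splitters, enumerate all preimages of $\le\min(a,b)$-sized subsets of the small range, and handle the $b<a$ case by complementation. The present paper only states this lemma with a citation and does not reprove it, so there is nothing further to compare against here.
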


\begin{definition}[\cite{CCHPP12}]
Let $G$ be a connected graph and $V^{\infty} \subseteq V (G)$ a set of undeletable vertices. A triple $(Z, V_1, V_2)$ of subsets of $V(G)$ is called a $(q,k)$-good node separation, if $|Z| \leq k$, $Z \cap V^{\infty} = \emptyset$, $V_1$ and $V_2$ are vertex sets of two different connected components of $G-Z$ and $|V1 \setminus V^{\infty}|, |V_2 \setminus V^{\infty}| >q$.
\end{definition}

\begin{definition}[\cite{CCHPP12}]
Let $G$ be a connected graph, $V^{\infty} \subseteq V (G)$ a set of undeletable vertices, and $T_b \subseteq V(G)$ a set of border terminals in $G$. A pair $(Z, (V_i)_{i=1}^l)$ is called a $(q,k)$-flower separation
in $G$ (with regard to border terminals $T_b$), if the following holds:
\begin{itemize}
 \item $1 \leq |Z| \leq k$ and $Z \cap V^{\infty} = \emptyset$; the set $Z$ is the {\it core} of the flower separation $(Z, (V_i)_{i=1}^l)$;
 \item $V_i$ are vertex sets of pairwise different connected components of $G-Z$, each set $V_i$ is a {\it petal} of the flower separation $(Z, (V_i)_{i=1}^l)$;
 \item $V (G) \setminus (Z \cup \bigcup_{i=1}^l V_i)$, called a {\it stalk}, contains more than $q$ vertices of $V \setminus V^{\infty}$;
 \item for each petal $V_i$ we have $V_i \cap T_b = \emptyset$, $|Vi \setminus V^{\infty}| \leq q$ and $N_G(V_i) = Z$;
 \item $|(\bigcup_{i=1}^l V_i) \setminus V^{\infty}| >q$.
\end{itemize}
\end{definition}

\begin{lemma}[\cite{CCHPP12}]\label{lemma:nodesep}
Given a connected graph $G$ with undeletable vertices $V^{\infty} \subseteq V (G)$ and integers $q$ and $k$, one may find in $\Oh (2^{\Oh(\min(q,k) \log(q+k))} n^3 \log n)$ time a $(q, k)$-good node separation of G,
or correctly conclude that no such separation exists.
\end{lemma}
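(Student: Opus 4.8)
The plan is to follow the ``highlighting'' paradigm built on the splitter-type family of Lemma~\ref{lemma:setfamily}: guess a bounded fingerprint of a hypothetical good node separation with that family, and then recover the separation itself by minimum vertex-cut computations. Fix, for the analysis, a hypothetical $(q,k)$-good node separation $(Z,V_1,V_2)$; the algorithm only needs to detect \emph{some} good node separation. Since $|V_1\setminus V^{\infty}|,|V_2\setminus V^{\infty}|>q$, fix $(q{+}1)$-element sets $\widehat V_1\subseteq V_1\setminus V^{\infty}$ and $\widehat V_2\subseteq V_2\setminus V^{\infty}$; then $\widehat V_1\cup\widehat V_2$ and $Z$ are disjoint, of sizes at most $2q{+}2$ and $k$. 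Applying Lemma~\ref{lemma:setfamily} to $U=V(G)$ with $a=2q{+}2$ and $b=k$ (truncated to $n$; the cases $q\ge n$ or $k\ge n$ are degenerate) produces, in $\Oh(2^{\Oh(\min(q,k)\log(q+k))}n\log n)$ time, a family $\F$ of size $\Oh(2^{\Oh(\min(q,k)\log(q+k))}\log n)$ such that some $S\in\F$ satisfies $\widehat V_1\cup\widehat V_2\subseteq S$ and $Z\cap S=\emptyset$; note $\min(2q{+}2,k)=\Oh(\min(q,k))$, so a single application suffices for all relative sizes of $q$ and $k$.

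For every $S\in\F$ the algorithm looks for a good node separation compatible with $S$. Because $S\cap Z=\emptyset$, the graph $G[S]$ is a subgraph of $G-Z$, so each connected component of $G[S]$ lies inside a single component of $G-Z$; hence for the correct $S$ the component of $G[S]$ meeting $\widehat V_1$ sits inside $V_1$ and the one meeting $\widehat V_2$ sits inside $V_2$. Using this, for each of the at most $n$ components $D$ of $G[S]$ I would compute a minimum vertex set $Z_D$, disjoint from $V^{\infty}$, separating $D$ from $S\setminus D$ in $G$: via the standard reduction of vertex cuts to edge cuts (split every deletable vertex, give $V^{\infty}$ vertices and the vertices of $D$ infinite capacity) this is one maximum-flow computation, and since only the predicate ``value $\le k$'' matters it costs $\Oh(k(n+m))=\Oh(kn^2)$ after discarding parallel edges and loops (which are irrelevant to node separations). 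Whenever $|Z_D|\le k$, I would test $Z_D$ together with its ``pushed'' variants (the minimum $(D,S\setminus D)$-separators whose $D$-side, respectively $(S\setminus D)$-side, is inclusion-minimal, read off from the residual graph) and check in linear time whether $G-Z_D$ has two distinct components, each with more than $q$ vertices outside $V^{\infty}$; if some test succeeds, output that separation. If nothing ever succeeds, report that no $(q,k)$-good node separation exists.

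Everything the algorithm outputs is a genuine $(q,k)$-good node separation, by construction. For the converse the key claim is: choosing the hypothetical separation $(Z,V_1,V_2)$ \emph{extremally} --- minimising $|V_1\setminus V^{\infty}|$, and subsidiarily making $V_1$ inclusion-minimal --- there exist $S\in\F$ and a component $D\subseteq V_1$ of $G[S]$ for which one of the separators tested from $D$ is already good. The shape of the argument is: for such an extremal witness one first shows $S$ meets $V_1$ in exactly the single component $D$, so $Z$ itself is a $(D,S\setminus D)$-separator avoiding $V^{\infty}$ and $|Z_D|\le|Z|\le k$; a submodularity/uncrossing argument on the closed set defining $V_1$ and the one defining the relevant side of $Z_D$ keeps that side inside $V_1$, and minimality of the witness forbids it from dropping below $q{+}1$ vertices outside $V^{\infty}$; meanwhile the $q{+}1$ vertices $\widehat V_2$ remain in $S\setminus D$ and, again by the extremal choice plus uncrossing, lie together in one component of the complement with more than $q$ non-$V^{\infty}$ vertices. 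Granting this, multiplying the costs --- $|\F|=\Oh(2^{\Oh(\min(q,k)\log(q+k))}\log n)$ sets, $\Oh(n)$ components each, one $\Oh(kn^2)$ cut per component, with the factor $k$ absorbed into the exponential term --- yields the claimed bound $\Oh(2^{\Oh(\min(q,k)\log(q+k))}n^3\log n)$.

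The main obstacle is precisely this key claim. In an ordinary separation problem one would be content with any small separator, but here \emph{both} sides of the produced separator must stay large, while the running-time budget allows only one minimum-cut computation per seed component --- enumerating all pairs of components of $G[S]$ would blow the running time up to roughly $n^4$. Pinning down the right extremal witness, the right notion of ``pushed'' separator, and the uncrossing step that simultaneously protects both large sides is where essentially all the difficulty lies; the remainder is the standard maximum-flow machinery and bookkeeping.
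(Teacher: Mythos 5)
This lemma is imported verbatim from Chitnis et al.\ (reference \cite{CCHPP12}) and the present paper gives no proof of it, so there is nothing in this manuscript to compare your argument against; what follows is an assessment of the argument on its own terms.

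Your overall structure --- invoke Lemma~\ref{lemma:setfamily} with $a=2q+2$, $b=k$ on $U=V(G)$, then for each $S\in\F$ do a bounded amount of max-flow work and test the resulting separators --- is the right shape, and your accounting of $|\F|$ and the construction time is correct. However, the specific per-$S$ procedure you propose --- for each connected component $D$ of $G[S]$, compute one minimum $(D,\,S\setminus D)$-vertex cut (plus its pushed variants) --- has a real correctness gap, and it is exactly the one you flag at the end. The problem is that $Z$ separates $D$ from the parts of $S$ lying in \emph{other} components of $G-Z$, but $S\setminus D$ may also contain other components of $G[S]$ that live in the \emph{same} component $V_1\supseteq D$ of $G-Z$; $Z$ does not separate those from $D$, so $\lambda_G(D,\,S\setminus D)$ need not be at most $k$ and can even be infinite. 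A concrete instance: let $V_1$ be a star with undeletable centre $c$ and leaves $u_1,u_2,u_3\notin V^{\infty}$, $V_2$ a star with undeletable centre $c'$ and leaves $v_1,v_2,v_3$, and $Z=\{a\}$ with $a$ adjacent to $c$ and $c'$; take $q=2$, $k=1$. The set-family lemma only promises some $S$ with $\{u_1,u_2,u_3,v_1,v_2,v_3\}\subseteq S$ and $a\notin S$; if that $S$ happens to omit $c$ and $c'$, then $G[S]$ consists of six isolated vertices, and \emph{every} cut you ask for (``separate $\{u_1\}$ from $S\setminus\{u_1\}$ avoiding $V^{\infty}$'') is infeasible because $u_1$ and $u_2$ are joined through the undeletable $c$. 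Your algorithm then wrongly reports that no good node separation exists. The ``extremal witness'' idea does not rescue this, because it is the set family --- not you --- that decides which extra vertices beyond $\widehat V_1\cup\widehat V_2$ end up in $S$, so you cannot force $S\cap V_1$ to induce a single component of $G[S]$. The standard fix (and, to the best of my understanding, what Chitnis et al.\ actually do) is to iterate over \emph{pairs} of components $D_1,D_2$ of $G[S]$ and compute, for each pair, a minimum $(D_1,D_2)$-vertex cut avoiding $V^{\infty}$ (with the closest/pushed choice used in the correctness argument): in the star example the pair $(\{u_1\},\{v_1\})$ has cut $\{a\}$ and the test succeeds. To keep the per-$S$ cost at $\Oh(n^3)$ despite $\Theta(n^2)$ pairs, one first reduces the multigraph to a simple graph and applies a Nagamochi--Ibaraki-style sparsification preserving all vertex cuts of size at most $k$, so that each bounded max-flow computation costs $\Oh(k^2 n)$ rather than $\Oh(kn^2)$, with the $\poly(k)$ factor absorbed into the exponential term. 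So: your scaffold is right, but the single-component seeding is not sound, and the gap you yourself identify as ``where essentially all the difficulty lies'' is genuine rather than a matter of routine bookkeeping.
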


\begin{lemma}[\cite{CCHPP12}]\label{lemma:flowersep}
Given a connected graph $G$ with undeletable vertices $V^{\infty} \subseteq V (G)$ and border terminals $T_b \subseteq V (G)$ and integers $q$ and $k$, one may find in $\Oh (2^{\Oh(\min(q,k) \log(q+k))} n^3 \log n)$  time a $(q, k)$-flower separation in $G$ w.r.t. $T_b$, or correctly conclude that no such flower separation exists.
\end{lemma}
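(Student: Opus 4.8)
The plan is to reduce the search to a bounded family of candidate \emph{cores}, test each in polynomial time, and generate the candidates with the set family of Lemma~\ref{lemma:setfamily}; this parallels the argument behind Lemma~\ref{lemma:nodesep}.

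\textbf{Testing a fixed core.} First I would prove: given a set $Z \subseteq V(G)\setminus V^\infty$ with $1 \le |Z| \le k$, one can decide in polynomial time whether some $(q,k)$-flower separation with core exactly $Z$ exists. Split the components of $G-Z$ into those that are \emph{petal-eligible} --- a component $C$ with $C\cap T_b=\emptyset$, $|C\setminus V^\infty|\le q$ and $N_G(C)=Z$ --- and the rest. By definition every petal of a flower separation with core $Z$ is petal-eligible, and conversely any subfamily of the petal-eligible components can be declared the petals, the remaining components forming the stalk. Hence a flower separation with core $Z$ exists iff the petal-eligible components can be partitioned into a ``petal'' group and a ``stalk'' group so that the petal group has more than $q$ vertices of $V\setminus V^\infty$ and the stalk group together with all non-petal-eligible components has more than $q$ vertices of $V\setminus V^\infty$. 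As every petal-eligible component contributes at most $q$ vertices of $V\setminus V^\infty$, this is a one-dimensional packing-feasibility question solved by a simple greedy scan once the counts $|C\setminus V^\infty|$ are computed.

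\textbf{Generating the cores.} It remains to produce $\Oh(2^{\Oh(\min(q,k)\log(q+k))}\log n)$ candidate cores that include the core of some flower separation whenever one exists. Fix a flower separation $(Z,(V_i)_{i=1}^\ell)$. Since $\bigcup_i(V_i\setminus V^\infty)$ is nonempty, some petal, say $V_1$, contains a vertex $u\in V_1\setminus V^\infty$. Apply Lemma~\ref{lemma:setfamily} to the ground set $U=V(G)\setminus V^\infty$ with $a=k$ and $b=q$: since $Z\subseteq U$, $|Z|\le k$, $V_1\setminus V^\infty\subseteq U$, $|V_1\setminus V^\infty|\le q$, and $Z\cap V_1=\emptyset$, the resulting family $\F$ contains a set $S$ with $Z\subseteq S$ and $(V_1\setminus V^\infty)\cap S=\emptyset$. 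Because $S\subseteq V(G)\setminus V^\infty$, we get $V_1\cap S=\emptyset$, and because $N_G(V_1)=Z\subseteq S$ the set $V_1$ cannot be escaped within $G-S$; as $G[V_1]$ is connected, the connected component of $u$ in $G-S$ is \emph{exactly} $V_1$, so $N_G(V_1)=Z$ is recovered on the nose. Thus the candidate family is obtained by iterating over all $S\in\F$ and all connected components $C$ of $G-S$ and taking $N_G(C)$ as a candidate core whenever $1\le|N_G(C)|\le k$ and $N_G(C)\cap V^\infty=\emptyset$; feeding each candidate to the test above and declaring ``no flower separation'' if all fail is correct. Multiplying $|\F|=\Oh(2^{\Oh(\min(q,k)\log(q+k))}\log n)$, the $\Oh(n)$ components per $S$, and the $\Oh(n^2)$-time test (plus the dominated construction time of $\F$) gives the claimed $\Oh(2^{\Oh(\min(q,k)\log(q+k))}n^3\log n)$ bound.

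\textbf{Main obstacle.} The point that needs the most care --- and where a naive implementation fails --- is the recovery of $Z$ from a single petal: if one builds the set family over all of $V(G)$, then $S$ is only forced to avoid the at most $q$ \emph{non-undeletable} vertices of $V_1$, and $S$ may still cut $V_1$ through $V_1\cap V^\infty$, so the component of $u$ in $G-S$ need not have neighbourhood exactly $Z$. Restricting the ground set of Lemma~\ref{lemma:setfamily} to $V(G)\setminus V^\infty$ is what makes the recovery exact, and one must then double-check that this choice still yields the $\min(q,k)$ (not $\max(q,k)$) dependence in the exponent --- which it does, since $a=k$ and $b=q$. The remaining work, including the edge cases of the packing-feasibility test (for instance when $q=0$, or when very few non-undeletable vertices lie outside $Z$), is routine.
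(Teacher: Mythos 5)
This lemma is imported verbatim from \cite{CCHPP12} and is not proved in the present paper, so there is no in-paper proof to compare against; what you have given is a reconstruction of the argument from the cited work. Your high-level plan is the right one and matches what Chitnis et al.\ do: use the universal set family (Lemma~\ref{lemma:setfamily}) to isolate a single petal, recover the core as that petal's neighbourhood, and reduce the whole search to a polynomial-time test per candidate core. Your observation about taking the ground set to be $V(G)\setminus V^\infty$ so that $S$ is forbidden from touching \emph{any} vertex of $V_1$ (not just the non-undeletable ones) is exactly the point that makes the recovery $N_{G-S}(V_1)=Z$ exact, and the parameter choice $a=k$, $b=q$ correctly yields the $\min(q,k)$ exponent.

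There is, however, a genuine gap in the ``testing a fixed core'' step. Once you have classified components of $G-Z$ as petal-eligible or not, you must decide whether the petal-eligible sizes $a_1,\dots,a_m\le q$ admit a subset $P$ with $\sum_{i\in P}a_i>q$ and $A-\sum_{i\in P}a_i+B>q$, i.e.\ a subset sum in the open interval $(q,\,A+B-q)$. You assert this is ``solved by a simple greedy scan,'' but a prefix-scan greedy (in either sorted order) can fail: with $q=10$, petal-eligible sizes $\{8,4,3\}$ (so $A=15$) and $B=7$, the target interval is $(10,12)$; decreasing-order prefixes are $0,8,12,15$ and increasing-order prefixes are $0,3,7,15$, none of which land in $(10,12)$, yet the subset $\{8,3\}$ sums to $11$ and is feasible. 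The correct statement is that this is a subset-sum feasibility question with all item sizes and the target bounded by $n$, so a standard $\Oh(nm)=\Oh(n^2)$ dynamic program over achievable sums decides it (and can be made constructive). With the greedy replaced by this DP, your per-core test remains $\Oh(n^2)$, the overall count $|\F|\cdot\Oh(n)\cdot\Oh(n^2)$ still gives $\Oh(2^{\Oh(\min(q,k)\log(q+k))}n^3\log n)$, and the proof goes through. (A side remark: your filter $N_G(C)\cap V^\infty=\emptyset$ is automatically satisfied since $N_G(C)\subseteq S\subseteq V(G)\setminus V^\infty$, so it is harmless but redundant.)
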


\begin{lemma}[\cite{CCHPP12}]\label{lemma:highconnected}
If a connected graph $G$ with undeletable vertices $V^{\infty} \subseteq V (G)$ and border terminals $T_b \subseteq V (G)$ does not contain a $(q, k)$-good node separation or a $(q, k)$-flower separation w.r.t. $T_b$ then, for any $Z \subseteq V(G) \setminus V^{\infty}$ of size at most $k$, the graph $G - Z$ contains at most $(2q + 2)(2^k -1) + |T_b | + 1$ connected components containing a vertex of $V(G) \setminus V^{\infty}$, out of which at most one has more than $q$ vertices not in $V^{\infty}$.

\end{lemma}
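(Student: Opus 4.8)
My plan is to argue by contraposition: I will assume that for some $Z \subseteq V(G) \setminus V^{\infty}$ with $|Z| \le k$, the graph $G - Z$ has "too many" components (or more than one "large" component) meeting $V(G) \setminus V^{\infty}$, and from this I will extract either a $(q,k)$-good node separation or a $(q,k)$-flower separation with respect to $T_b$. First I would handle the easy half of the count. Call a connected component $C$ of $G - Z$ \emph{big} if $|V(C) \setminus V^{\infty}| > q$ and \emph{small} otherwise. If there were two distinct big components $C_1, C_2$, then taking $Z' \subseteq Z$ to be the set of vertices of $Z$ with a neighbour in $C_1$, we could pick an inclusion-minimal such subset; the triple $(Z', V(C_1), V(C_2'))$ — where $C_2'$ is any big component of $G - Z'$ other than the one containing $C_1$ — would witness a $(q,k)$-good node separation, since both sides have more than $q$ non-undeletable vertices and $Z' \cap V^{\infty} = \emptyset$. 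A little care is needed to ensure the two components remain separated after shrinking $Z$ to $Z'$, but since $G$ is connected we can always route through $Z$, so some sub-separator does the job. This establishes the "at most one large component" part.

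For the main counting bound I would bound the number of small components of $G - Z$ meeting $V(G) \setminus V^{\infty}$. Group the small components according to which subset $Z'' \subseteq Z$ equals their neighbourhood $N_G(\cdot)$; there are at most $2^k - 1$ non-empty such subsets (a component meeting $V(G)\setminus V^\infty$ in a connected graph cannot have empty neighbourhood in $Z$ unless $G - Z$ is itself that component, which is the big case). Now fix one such subset $Z''$ and consider all small components $V_1, \dots, V_r$ with $N_G(V_i) = Z''$. I would further discard those petals that contain a border terminal — there are at most $|T_b|$ of these, contributing the $+|T_b|$ term. Among the remaining border-terminal-free candidate petals, if their total non-undeletable size $|(\bigcup V_i) \setminus V^{\infty}|$ exceeds $q$ \emph{and} the stalk $V(G) \setminus (Z'' \cup \bigcup V_i)$ also contains more than $q$ non-undeletable vertices, then $(Z'', (V_i))$ is exactly a $(q,k)$-flower separation, contradicting the hypothesis. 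Hence for each of the $2^k - 1$ subsets $Z''$, either the candidate petals together have at most $q$ non-undeletable vertices — giving at most $q$ such components, since each is non-empty outside $V^\infty$ — wait, more carefully, at most $q$ components in this bucket — or the stalk is small, i.e. has at most $q$ non-undeletable vertices, in which case $G - Z''$ restricted to the non-petal part is itself small and the big component (if any) lies among the petals, a situation I would rule back into the good-node-separation case. Combining: at most $q$ small components per bucket, $2^k - 1$ buckets, a factor of $2$ for the two orientations of which side plays the role of "petals", plus the $|T_b|$ border-terminal petals and the single possible big component and a $+1$ slack, yields the claimed $(2q+2)(2^k - 1) + |T_b| + 1$.

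The main obstacle, I expect, is getting the bookkeeping in the flower-separation extraction exactly right: the definition of flower separation demands simultaneously that the petals be small ($|V_i \setminus V^{\infty}| \le q$), border-terminal-free, have \emph{full} neighbourhood $N_G(V_i) = Z$ (not just $\subseteq Z$), and that the union of petals be large while the stalk is \emph{also} large. Orchestrating the small components of $G - Z$ so that all of these hold at once — in particular, having to shrink the core from $Z$ to some $Z''$ while preserving that every chosen petal's neighbourhood is \emph{all} of $Z''$, and ensuring the stalk stays above $q$ after we move some components into the petal collection — is where the real work lies; everything else is a union bound. I would organize this by first fixing $Z''$, then greedily adding border-terminal-free small components with neighbourhood exactly $Z''$ to the petal set until either we exhaust them (bucket is small, done) or the union crosses the $q$ threshold (flower found, provided the stalk is still large, which it is because the remaining big component — there is at most one — sits in the stalk and has more than $q$ non-undeletable vertices).
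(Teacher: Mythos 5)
Note first that the paper does not prove Lemma~\ref{lemma:highconnected} itself; it is cited verbatim from Chitnis et al.~\cite{CCHPP12}, so there is no in-paper proof to match yours against. Evaluating your argument on its own terms, your overall plan (contraposition, split into big/small components, group small components by their neighbourhood in $Z$, peel off border-terminal petals, and extract a flower separation from a large bucket) is the right one, and the $2^k-1$ and $|T_b|$ bookkeeping is sound. However, there is a real gap in the step where you argue that the stalk is large.

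In your greedy extraction you add border-terminal-free small components with neighbourhood exactly $Z''$ to the petal set until the union crosses the $q$ threshold, and you then claim the stalk is large \emph{because the unique big component sits in the stalk and has more than $q$ non-undeletable vertices}. But nothing guarantees that a big component exists: the hypothesis only says there is \emph{at most} one. If $G-Z$ has no big component at all, your argument gives no lower bound on the stalk, and the flower separation you want to exhibit is not certified. The correct accounting is that if a single bucket contains more than $2q+2$ border-terminal-free small components, each contributing at least one vertex of $V(G)\setminus V^\infty$, then you can take any $q+1$ of them as petals (so $|(\bigcup V_i)\setminus V^\infty|\ge q+1>q$) and the remaining at least $q+2$ components of that bucket lie entirely in the stalk, giving $|{\rm stalk}\setminus V^\infty|\ge q+2>q$ \emph{independently of whether a big component exists}. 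This is also where the per-bucket constant $2q+2$ comes from, rather than your informal ``factor of $2$ for the two orientations,'' which is not a real argument.

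Two smaller remarks. For the ``at most one big component'' half, shrinking $Z$ to an inclusion-minimal $Z'$ is both unnecessary and actually problematic: once you remove only $Z'\subsetneq Z$, the set $V(C_1)$ need no longer be a connected component of $G-Z'$, so the triple $(Z',V(C_1),V(C_2'))$ is not well-formed as a good node separation. The direct triple $(Z,V(C_1),V(C_2))$ already satisfies every clause of the definition and should be used instead. Finally, when you observe that every small component $V_i$ with $N_G(V_i)=Z''\subseteq Z$ is also a connected component of $G-Z''$, that observation is correct and is needed to make the flower separation well-formed; it deserves to be stated explicitly rather than left implicit.
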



\section{NP-completeness of \mc}

We prove that \mc\ in NP-complete by giving a reduction from the \bpvcfull\ problem which is defines as follows. 

\probnamedefn{\bpvcfull\ (\bpvc)}{A bipartite graph $G=(X \uplus Y, E)$, integers $p$ and $q$}{Does there exist $S \subseteq V(G)$ such that $|S| \leq p$ and at least $q$ edges in $E$ are incident on $X$?}

\begin{theorem}[\cite{AS14,JV12}]
 \bpvc\ is NP-complete. 
\end{theorem}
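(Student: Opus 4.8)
Since this theorem is quoted from the literature (\cite{AS14,JV12}), what follows is a sketch of how one would organise such a proof rather than a reconstruction of the exact argument. Membership in \textbf{NP} is immediate: a set $S\subseteq V(G)$ with $|S|\le p$ is a certificate of polynomial size, and one can count in linear time how many edges of $E$ are incident to $S$ and compare the count with $q$. So the entire content is \textbf{NP}-hardness.

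For hardness one cannot start from ordinary \textsc{Vertex Cover}, since that problem is polynomial on bipartite graphs by K\"onig's theorem; it is the \emph{partial} relaxation that re-introduces hardness, so one must reduce from a problem whose hardness survives a bipartite encoding — the natural candidates being coverage/packing problems such as \textsc{Maximum Coverage}/\textsc{Set Cover}, or density problems such as \textsc{Clique}/\textsc{Densest-$k$-Subgraph} restricted to uniform (e.g.\ regular) instances. The plan is: given a source instance, build an \emph{incidence-type} bipartite graph, with one side carrying the ``elements'' (vertices of the source) and the other side the ``sets'' (edges, or subsets, of the source), possibly augmented with pendant gadgets; then choose the budget $p$ and threshold $q$ so that any $S$ of size $\le p$ covering $\ge q$ edges is forced into a canonical shape that reads off a solution of the source — for instance, ``take the complement of a candidate clique on the element side together with the edge-vertices lying inside it''. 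One then proves the two implications and checks that the construction is polynomial-time computable.

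The easy direction — a source solution yields a partial cover of the prescribed size and coverage — is a direct computation. The main obstacle is the converse: showing that \emph{every} set $S$ with $|S|\le p$ covering $\ge q$ edges can be transformed, without losing coverage or exceeding the budget, into the canonical shape, and hence corresponds to a genuine source solution. This is delicate precisely because, in a bipartite incidence graph, the number of edges incident to $S$ counts \emph{incidences hit}, not \emph{distinct elements covered} nor \emph{edges induced}: a careless construction admits ``cheating'' covers that spend part of the budget on the wrong side (e.g.\ on edge-vertices whose endpoints are already chosen, or on fewer element-vertices compensated by extra edge-vertices). Hence $p$ and $q$ must be tuned so tightly that all slack in the governing counting inequalities is squeezed out and equality is forced throughout; it is here that the uniformity (regularity) of the source instance is exploited, and verifying this chain of inequalities — and ruling out the spurious covers — is the technical heart of the argument.
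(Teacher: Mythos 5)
The paper does not prove this theorem: it is imported from \cite{AS14,JV12} and used as a black box in the subsequent reduction establishing NP-hardness of \mc. You correctly recognized this and offered only a strategy sketch rather than a claimed proof, which is the honest response. Your sketch is consistent with how the argument actually goes in the cited works: membership in NP is trivial; ordinary \textsc{Vertex Cover} is ruled out as a source by K\"onig's theorem, so one must reduce from a density or coverage problem (a \textsc{Clique}-type problem on regular instances in Apollonio--Simeone) through a bipartite incidence construction; and the technical weight sits in the converse direction, where $p$ and $q$ must be chosen so tightly that every budget-respecting partial cover can be normalized to a canonical one. Since you explicitly stop before that technical core, and since the paper itself supplies no proof of this statement to compare against, there is neither a divergence of approach nor a gap in a claimed argument to report beyond the incompleteness you already acknowledged yourself.
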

For an instance of \bpvc, we assume that the given bipartite graph does not have any isolated vertices, as a reduction rule can be applied in polynomial time which takes care of isolated vertices and produces an equivalent instance. Given an instance $(G,p,q)$ of \bpvc\ where $G=(X \uplus Y, E)$ is a bipartite graph, we get an instance $(G',s,t,k,l)$ of \mc\ as follows. To get the graph $G'$, we introduce two new vertices $s$ and $t$ and add all edges from $s$ to $X$ and $t$ to $Y$. More formally, $G'=(V',E')$ where $V' = V(G)\cup \{s,t\}$ and $E'=E \cup \{sx\ |\ x \in X\} \cup \{ty \ | \ y \in Y\}$. Then we put $k=p$ and $l=m-q$, where $m= |E|$. It is easy to see that $(G,p,q)$ is a \Yes\ instance of \bpvc\ if and only if $(G',s,t,k,l)$ is a \Yes\ instance of \mc, and hence we get the following theorem.

\begin{theorem}
 \mc\ is NP-complete even on bipartite graphs. 
\end{theorem}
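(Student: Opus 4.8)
The plan is to prove the reduction claim that $(G,p,q)$ is a \Yes\ instance of \bpvc\ if and only if $(G',s,t,k,l)$ is a \Yes\ instance of \mc, with $k=p$ and $l=m-q$, and then observe that NP-hardness follows from Theorem~\ref{} (hardness of \bpvc), while membership in NP is immediate since a candidate pair $(X,F)$ can be verified in polynomial time. Since $G'$ is bipartite (the new vertices $s$ and $t$ can be placed on the two sides consistently with the bipartition $X\uplus Y$, as $s$ only sees $X$ and $t$ only sees $Y$), the stronger statement about bipartite graphs comes for free once the equivalence is established.

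For the forward direction, suppose $S\subseteq V(G)$ with $|S|\leq p$ covers at least $q$ edges of $E$ in the sense of the \bpvc\ definition. First I would massage $S$ into a clean form: since $G$ has no isolated vertices, we may assume $|S|=p$ exactly, and by a local exchange argument we may assume $S$ consists only of vertices that are endpoints of edges it is meant to cover. In $G'$, take the deleted vertex set $X' := S$ (of size $p=k$) and let $F$ be the set of all edges of $E$ \emph{not} incident to $S$; by hypothesis at least $q$ edges of $E$ are incident to $S$, so $|F|\leq m-q=l$. I then claim $s$ and $t$ lie in different components of $G'-(X',F)$: any $s$–$t$ path in $G'$ must start with an edge $sx$ for some $x\in X$, end with an edge $yt$ for some $y\in Y$, and since $G$ is bipartite the path alternates sides, so it must traverse at least one original edge $e\in E$; but every such $e$ is either deleted (if $e\in F$) or has an endpoint in $S=X'$ (if $e\notin F$), so no $s$–$t$ path survives.

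For the reverse direction, suppose $(X',F)$ is a solution to the \mc\ instance with $|X'|\leq k=p$, $|F|\leq l=m-q$, separating $s$ from $t$ in $G'$. The first step is to normalize: we may assume $s,t\notin X'$ (deleting a terminal is never needed and in any case we can reroute, or simply note that if $s\in X'$ we can replace it), and then every $s$–$t$ path is killed. Set $S:=X'\cap V(G)$; then $|S|\leq p$. I want to show at least $q$ edges of $E$ are incident to $S$. Consider the edges of $E$ \emph{not} deleted and \emph{not} incident to $S$; call this set $E_0$. Each $e=xy\in E_0$ together with the edges $sx$ and $yt$ forms an $s$–$t$ walk in $G'-(X',F)$ — provided $sx\notin F$ and $yt\notin F$. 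This is where care is needed: some of the budget $l$ may be spent on $s$- or $t$-incident edges rather than on original edges, so the naive count does not immediately give $|E_0|\le l - |F\cap E|$ in a usable form.

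The main obstacle, then, is handling the case where the solution deletes edges incident to $s$ or $t$, or deletes vertices of $X$ or $Y$ "wastefully." The clean way around this is a normalization lemma: given any solution $(X',F)$ to the \mc\ instance, one can transform it into another solution $(X'',F'')$ of no larger size in which $F''\subseteq E$ (no edge incident to $s$ or $t$ is deleted) and $X''\subseteq V(G)$. Indeed, if $sx\in F''$ for some $x\in X$, then $x$ has degree $\deg_G(x)\ge 1$ in $G$ and we can replace the single deletion $sx$ by deleting $x$ itself, moving it from $F''$ to $X''$; this can only help separate $s$ from $t$ and does not increase $|X''|+|F''|$... but it may increase $|X''|$ beyond $p$ while decreasing $|F''|$, so this exchange must be done more carefully, trading within budgets. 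The correct exchange: if $sx\in F''$, put $x$ into $X''$ and remove from $F''$ both $sx$ and every original edge incident to $x$ that lay in $F''$ — since such edges are now redundant; one checks $|X''|$ grows by one but this is compensated because deleting $x$ was "forced" in the sense that $x$ must be separated from $s$ or from $t$. A cleaner route avoiding this bookkeeping: observe that WLOG a minimal solution never deletes an edge incident to $s$ (replace $sx$ by $x$, and symmetrically for $t$), because deleting the vertex dominates deleting any single incident edge when the other endpoint is a terminal of degree constraints — and then a counting argument on $E\setminus(F\cup\{e : e\text{ incident to }S\})$ being empty (every such edge gives a live $s$–$t$ path) yields $|E|\le |F| + (\#\text{ edges incident to }S)$, hence $\#(\text{edges incident to }S)\ge m - l = q$, so $S$ witnesses a \Yes\ for \bpvc. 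I would write the normalization as a short explicit argument and then the counting as a one-line consequence.
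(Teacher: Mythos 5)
Your forward direction is fine, but the reverse direction has a real gap that you noticed and then waved away. The normalization you propose — replace a deleted edge $sx\in F$ by the vertex deletion $x\in X'$ — trades the edge budget $l$ for the vertex budget $k$; a solution with $|X'|=k$ and slack in $l$ cannot absorb this exchange, and there is no reason a minimal MC solution avoids $s$- and $t$-incident edges. In fact the equivalence you are trying to prove does not hold for the reduction as literally written: take $X=\{x_1\}$, $Y=\{y_1,\dots,y_5\}$, $E=\{x_1y_i\}_{i=1}^5$, $p=0$, $q=1$. Then BPVC is \No\ ($S=\emptyset$ covers nothing), but in $G'$ with $k=0$, $l=m-q=4$, deleting the single edge $sx_1$ disconnects $s$ from $t$, so \mc\ is \Yes. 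More generally, whenever $\deg_{G'}(s)\le l$ or $\deg_{G'}(t)\le l$ the MC instance is trivially \Yes\ by cutting off a terminal, independent of the BPVC answer. So the ``WLOG a minimal solution never deletes an edge incident to $s$'' claim is false, and with it the counting $\#(\text{edges covered by }S)\ge m-l=q$ collapses.

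The missing idea is a gadget that makes the $s$- and $t$-incident edges effectively undeletable. Since \mc\ is posed on multigraphs, the natural fix is to give each edge $sx$ ($x\in X$) and $ty$ ($y\in Y$) multiplicity $l+1$. Then no feasible $F$ (of size $\le l$) can remove all copies of any such bundle, so one may assume $F\subseteq E$ outright (and $s,t$ should be excluded from $X'$ as in the \mmcu\ formulation, or protected by an analogous vertex gadget). With that normalization in hand, your counting argument is exactly right: every edge $xy\in E$ with $x,y\notin X'$ and $xy\notin F$ yields a live path $s$--$x$--$y$--$t$, so $E\subseteq F\cup\{e:\ e\ \text{incident to}\ X'\}$, giving $\#(\text{edges covered by }X')\ge m-|F|\ge m-l=q$. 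Your forward direction and the observation that $G'$ remains bipartite need no change. The paper itself only asserts the equivalence as ``easy to see,'' but the bundle trick (or an equivalent guard on the terminal-incident edges) is genuinely needed for the reverse implication, and your write-up does not supply it.
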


 \section{An algorithm for \mmcu}

In this section, we describe the FPT algorithm for \mmcu. To that end, we first describe how we can bound the number of equivalence classes in a connected component of the given graph.

\subsection{Reduction of Equivalence Classes}

We first show that if there exists a vertex which has a large number of vertex disjoint paths to terminals which are in different equivalence classes, then that vertex has to be part of the solution.

\begin{lemma}\label{lemma:disjointpaths}
 Let $\I = (G, T, \R, k, l)$ be a \mmcu\ instance and let $v \in V (G) \setminus T$. Assume that there exist $k + l + 2$ paths $P_1, P_2 , \ldots , P_{k+l+2}$ in $G$, such that:
\begin{itemize}
 \item for each $1 \leq i \leq k + l + 2$, the path $P_i$ is a simple path that starts at $v$ and ends at $v_i \in T$;
 \item any two paths $P_i$ and $P_j$, $i \neq j$, are vertex disjoint except for the vertex $v$. 
\item  for any $i \neq j$, $(v_i , v_j) \notin \R$.
\end{itemize}
Then for any solution $(X,F)$ of $\I$ we have $v \in X$.
\end{lemma}

\begin{proof}
For a contradiction assume that $v\notin X$. Since for $i \neq j$, any two paths $P_i$ and $P_j$ are vertex disjoint except for the vertex $v$, we have that after we delete vertices of $X$ and edges of $F$ there are still two paths remaining among the given collection, say $P_a$ and $P_b$. But then we get a path from some $v_a\in T$ to some vertex $v_b \in T$ in $G - (X,F)$ by concatenating $P_a$ and $P_b$ such that $(v_a,v_b) \notin \R$. This contradicts the fact that $(X,F)$ is a solution to $\I$. 
\end{proof}

\begin{lemma}\label{lemma:disjointpathalgo}
 Let $\I = (G, T, \R, k, l)$ be a \mmcu\ instance. For any $v \in V(G)$, we can verify of $v$ satisfies conditions of Lemma~\ref{lemma:disjointpaths} in $\Oh((k+l)n^2)$ time. 
\end{lemma}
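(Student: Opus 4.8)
The plan is to reduce the verification to a single maximum-flow computation in an auxiliary directed network, and to observe that we only need to push $k+l+2$ units of flow, so a bounded number of augmenting-path searches suffices. If $v \in T$ then $v$ cannot satisfy the hypothesis of Lemma~\ref{lemma:disjointpaths} (which requires $v \in V(G)\setminus T$), so assume $v \notin T$. Build a network $H$ as follows: split every vertex $u \in V(G) \setminus \{v\}$ into $u_{\mathrm{in}}, u_{\mathrm{out}}$ joined by an arc of capacity $1$ (this enforces internal vertex-disjointness); for every edge $uw \in E(G)$ add arcs $u_{\mathrm{out}} \to w_{\mathrm{in}}$ and $w_{\mathrm{out}} \to u_{\mathrm{in}}$ (capacity $1$, or $\infty$); use $v$ itself as the source, with arcs $v \to u_{\mathrm{in}}$ for each neighbour $u$ of $v$. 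For each equivalence class $C$ of $\R$ with $C \cap T \neq \emptyset$ create a class-node $c_C$, add an arc $t_{\mathrm{out}} \to c_C$ of capacity $1$ for every $t \in C$, and an arc $c_C \to \tau$ of capacity $1$ into a global sink $\tau$.

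For correctness, note all capacities are integral, so a maximum flow of value $f$ decomposes into $f$ arc-disjoint $v$–$\tau$ paths (after deleting cycles). Each such path has the form $v \to \cdots \to t_{\mathrm{out}} \to c_C \to \tau$ for some terminal $t$ of class $C$; the capacity-$1$ arcs $c_C \to \tau$ force the classes to be pairwise distinct, the capacity-$1$ arcs $u_{\mathrm{in}} \to u_{\mathrm{out}}$ force the corresponding paths in $G$ to be pairwise vertex-disjoint except at $v$, and after contracting split vertices and removing cycles we obtain simple paths in $G$ from $v$ to terminals meeting the three bullet conditions of Lemma~\ref{lemma:disjointpaths}. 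Conversely, given $k+l+2$ such paths $P_1,\dots,P_{k+l+2}$ with endpoints $v_1,\dots,v_{k+l+2}$ in pairwise distinct classes, routing one unit of flow along each $P_i$ (through the appropriate $v_i$-to-$c_{C_i}$-to-$\tau$ suffix, the $C_i$ being distinct) yields a feasible flow of value $k+l+2$. Hence $v$ satisfies the hypothesis of Lemma~\ref{lemma:disjointpaths} if and only if the maximum $v$–$\tau$ flow in $H$ has value at least $k+l+2$.

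For the running time, $H$ has $\Oh(n)$ vertices; to control the number of arcs in a multigraph we first collapse parallel edges, which is harmless since no two vertex-disjoint (except at $v$) simple paths can use two parallel copies of the same edge, leaving $\Oh(n^2)$ arcs. We then run the Ford–Fulkerson augmenting-path method, each augmentation found by a single graph search in $\Oh(n^2)$ time. We stop as soon as the flow value reaches $k+l+2$, or when a search fails to find an augmenting path (in which case the value is smaller and $v$ fails the test); in either case at most $k+l+2$ searches are performed, giving total time $\Oh((k+l)\, n^2)$.

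The only real obstacle is encoding the "pairwise distinct equivalence classes" constraint into a flow problem, which is handled cleanly by the capacity-$1$ class-nodes $c_C$; the remaining points — that paths may pass through other terminals as internal vertices (absorbed automatically by the vertex-splitting gadget), that the graph is a multigraph (absorbed by collapsing parallel edges), and that we may short-cut a computed flow of large value at $k+l+2$ units — are routine.
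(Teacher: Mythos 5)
Your proof is correct and follows the same high-level strategy as the paper's: reduce the test to a single $s$--$t$ max-flow computation in an auxiliary directed network with a super-sink that has one incoming capacity-$1$ arc per equivalence class, and then observe that only $k+l+2$ augmenting-path rounds are needed. Where you differ is in being more careful about two points the paper's sketch glosses over. First, you vertex-split every vertex other than $v$ into an in/out pair joined by a capacity-$1$ arc, which is what actually enforces the ``vertex-disjoint except at $v$'' condition of Lemma~\ref{lemma:disjointpaths}; the paper's construction only assigns unit capacities to arcs (no vertex capacities), and a max flow in that network decomposes into arc-disjoint, not vertex-disjoint, paths, so as literally written the paper's network tests a weaker condition. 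Second, you explicitly collapse parallel edges before running the flow algorithm; since $G$ is a multigraph, without this step $|E(D)|$ need not be $\Oh(n^2)$ and the claimed $\Oh((k+l)n^2)$ bound would not follow directly. Both fixes are standard and clearly what the authors intended, so your proposal is essentially the paper's argument done more completely.
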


\begin{proof}
To verify whether a vertex  $v\in V(G)$ satisfies the conditions of Lemma~\ref{lemma:disjointpaths} we do as follows. Given a graph $G$ we transform it into a directed network $D$ as follows. For every edge $uv\in E(G)$ we have two directed arcs $(u,v)$  and $(v,u)$ in $E(D)$ and for every equivalence class $Z\in \R$ we make a new vertex $t_z$ and give a directed arc from $z\in Z$ to $t_z$ (that is we add $(z,t_z)$. Finally, we add  a super-sink $t$ and give directed arc $(t_z,t)$ for every  $Z\in \R$. Furthermore, give every arc unit capacity. Now to check whether $v$ satisfies the conditions of Lemma~\ref{lemma:disjointpaths} all we need to check whether there is a flow of size $k+l+2$ in $D$ from $v$ to $t$. This can be done by running $k+l+2$ rounds of augmenting path and thus the running time is upper bounded by $\Oh((k+l+2)\cdot (|E(D)|+|V(D)|))$. Thus the claimed running time follows. 
\end{proof}

\begin{step}\label{step:deletevertex}
 For each $v \in V (G)$, if $v$ satisfies the conditions of Lemma~\ref{lemma:disjointpaths}, delete $v$ from the graph and decrease $k$ by one; if $k$ becomes negative by this operation, return \No. Then restart the algorithm.
\end{step}

Using the algortihm described in Lemma~\ref{lemma:disjointpathalgo}, each application of Step~\ref{step:deletevertex} takes $\Oh((k+l)n^3)$ time. As Step~\ref{step:deletevertex} can not be applied more than $k$ times, all applications of the step take $\Oh(k(k+l)n^3)$ time. Now we show how application of Step~\ref{step:deletevertex} gives a bound on number of equivalence classes.

\begin{lemma}\label{lemma:equivalenceclassbound}
 Let $\I = (G, T, \R, k, l)$ be the instance of \mmcu~which we get after the exhaustive application of Step~\ref{step:deletevertex}. If there exists a connected component in $G$ containing terminals from more than $(k+l)(k+l+1)$ equivalence classes of $\R$, then $\I$ is a \No\ instance of \mmcu.
\end{lemma}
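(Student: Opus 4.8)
The plan is to argue by contradiction using Lemma~\ref{lemma:disjointpaths} together with a counting/Menger-type argument. Suppose there is a connected component $C$ of $G$ containing terminals from at least $(k+l)(k+l+1)+1$ distinct equivalence classes, and suppose for contradiction that $\I$ has a solution $\X=(X,F)$ with $|X|\le k$, $|F|\le l$. Since $\X$ is a solution, after deleting $X$ and $F$ the terminals of $C$ coming from different equivalence classes must be separated. The key point is that $|X|+|F|\le k+l$, so the ``cut'' $\X$ has total size at most $k+l$; I want to show that $k+l$ deletions cannot separate terminals from that many equivalence classes inside a single connected component, unless one of the vertices of $C$ satisfies the hypothesis of Lemma~\ref{lemma:disjointpaths} — but after the exhaustive application of Step~\ref{step:deletevertex}, no such vertex exists, giving the contradiction.

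To make this precise, first I would reduce to the following situation: pick one representative terminal from each of the $(k+l)(k+l+1)+1$ equivalence classes represented in $C$, call this set $T'$; these all lie in the same connected component, so any two are joined by a path in $G$. Now consider how $\X$ acts on $C$: the graph $C-\X$ has several connected components, and each class-representative in $T'$ ends up in one of them; since $\X$ is a solution, no two representatives from $T'$ lie in the same component of $C-\X$ (they are pairwise non-equivalent). Hence $C-\X$ has at least $(k+l)(k+l+1)+1$ distinct components, each containing a vertex of $T'$. Since deleting at most $k+l$ vertices/edges from a connected graph $C$ can increase the number of connected components by at most $k+l$... wait — that bound is exactly what I must be careful about, because it is only true for edge deletions; a single vertex deletion can split a component into many pieces. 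So the naive "each deletion adds at most one component'' argument does not work, and this is where the real content lies.

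The correct approach is a branching/path argument: I would show that if $C-\X$ has many components each hitting $T'$, then some vertex $v$ of $C$ (in particular some vertex affected by $\X$, i.e. in $X$ or incident to $F$) has $k+l+2$ vertex-disjoint paths to pairwise non-equivalent terminals. Concretely, build an auxiliary tree/forest structure on the components of $C-\X$ witnessing how they were separated, or more directly: take the representatives in $T'$ and, since $C$ is connected, a spanning connected subgraph; repeatedly contract, and observe that since there are $>(k+l)(k+l+1)$ representatives but only $\le k+l$ "separating objects'' in $\X$, by pigeonhole some single vertex or edge endpoint in $V(\X)$ lies on disjoint paths to more than $k+l+1$ of the representatives. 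One clean way: for each component $D$ of $C-\X$ containing a representative $t_D\in T'$, the set $N_C(D)\cap V(\X)$ (its "attachment'' to the cut) is nonempty (else $D=C$ contradicting that there are $\ge 2$ such components); assign $D$ to some $v_D\in N_C(D)\cap V(\X)\subseteq V(\X)$, where $|V(\X)|\le k+2l\le 2(k+l)$ — hmm, we need $|V(\X)|$ bounded by about $k+l$, which holds up to a factor since $|X|\le k$ and $F$ has $\le l$ edges hence $\le 2l$ endpoints, but by Observation~\ref{obs} (for a minimal solution, which we may assume) the edges of $F$ are not incident to $X$, and we get roughly $k+2l$ relevant vertices. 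By pigeonhole among the $\ge(k+l)(k+l+1)+1 \ge (k+l)\cdot(k+2l) + 1$ representatives — here I should double-check the arithmetic constant $(k+l)(k+l+1)$ is chosen exactly so that the pigeonhole yields $\ge k+l+2$ components assigned to one vertex $v\in V(\X)$ — some single vertex $v$ has at least $k+l+2$ components $D$ assigned to it. From $v$, route into each such $D$ via an edge of $C$ to a vertex of $D$ and then, inside $D$ (which is connected and disjoint from $\X$), to the representative $t_D$; these paths are pairwise vertex-disjoint except at $v$ because the $D$'s are distinct components of $C-\X$. That exhibits the configuration of Lemma~\ref{lemma:disjointpaths} at $v$, so $v$ should have been deleted by Step~\ref{step:deletevertex} — contradicting that the algorithm has been run exhaustively and $v$ still survives in $G$.

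The main obstacle, and the step I would spend the most care on, is the pigeonhole bookkeeping: making sure (a) we may assume $\X$ is a minimal solution so that Observation~\ref{obs} applies and $V(\X)$ has size at most roughly $k+l+l = k+2l$, or arguing directly with $X$-vertices and $F$-edge-endpoints so the relevant count is at most $k+l+1$ or so; (b) every representative's component is indeed attached to $V(\X)$ (using connectivity of $C$ and that there are at least two such components); and (c) the disjointness of the constructed paths holds exactly — two components of $C-\X$ share no vertices, and each is vertex-disjoint from $\X$ itself, so the only common vertex of $P_i$ and $P_j$ is $v$, and $v\notin T$ since $v\in V(\X)$ and (for vertex deletions) $X\cap T=\emptyset$, while for edge endpoints we may choose $v\notin T$ or handle terminals separately. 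Once this combinatorial core is set up, invoking Lemma~\ref{lemma:disjointpaths} and the exhaustiveness of Step~\ref{step:deletevertex} closes the argument immediately.
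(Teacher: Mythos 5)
Your overall plan is exactly the paper's: argue by contradiction, use pigeonhole to locate a vertex with many vertex-disjoint paths to pairwise non-equivalent terminals, and invoke Lemma~\ref{lemma:disjointpaths} together with the exhaustiveness of Step~\ref{step:deletevertex}. However, there is a concrete gap in the pigeonhole bookkeeping that you flagged but did not resolve, and it does not resolve the way you hope. You pigeonhole the components of $C-\X$ over $V(\X)$, whose size is bounded only by $k+2l$ (Observation~\ref{obs} does not shrink this), and then assert $(k+l)(k+l+1)+1\ge (k+l)\cdot(k+2l)+1$; this inequality fails whenever $l\ge 2$. With a bin count of $k+2l$, the threshold $(k+l)(k+l+1)$ terminals is simply not enough to force some vertex of $V(\X)$ to collect $k+l+2$ components, so the contradiction does not follow from your count.

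The paper's fix is a small but essential preprocessing of the cut set: instead of $V(\X)$, take $X' := X \cup \{\,\text{exactly one endpoint of each edge of }F\,\}$, so $|X'|\le k+l$. Since every edge of $F$ loses an endpoint, $G-X'$ contains no edge of $F$ and is therefore a subgraph of $G-\X$; hence each connected component of $G-X'$ contains terminals from at most one equivalence class of $\R$. Now the more than $(k+l)(k+l+1)$ equivalence classes represented in $C$ give more than $(k+l)(k+l+1)$ components of $C-X'$ each containing a terminal, every such component is adjacent to $X'$ by connectivity of $C$, and pigeonholing over the at most $k+l$ vertices of $X'$ produces a vertex $v\in X'$ adjacent to at least $k+l+2$ of these components. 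Routing from $v$ into each component to a terminal gives the configuration of Lemma~\ref{lemma:disjointpaths}, yielding the contradiction. If you replace your use of $V(\X)$ with this $X'$, the rest of your argument goes through as written.
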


\begin{proof}
 For a contradiction assume that $\I$ is a \Yes\ instance of \mmcu. We will show that if this is the case then there exists an opportunity to apply Step~\ref{step:deletevertex}. Let $(X,F)$ be a solution to $\I$ and $X'$ be a set of vertices containing all of $X$ and exactly one vertex from each of the edges in $F$. Clearly, the size of $X'$ is at most $k+l$. Let $C_1, \ldots, C_s$ be  the connected components of $G-X'$ that contains at least one terminal vertices. Since  $G$ contains terminals from more than $(k+l)(k+l+1)$ equivalence classes of $\R$ and each connected component of $G-X'$ contains vertices from at most one equivalence class we have that $s>(k+l)(k+l+1)$. Thus by Pigeon Hole Principle  there exists a vertex $v\in X'$ that has neighbors in at least $k+l+2$ connected components. These connected components together with $v$ imply that we could apply Step~\ref{step:deletevertex}. This contradicts our assumption that on $\I$ Step~\ref{step:deletevertex} is no longer applicable. Thus indeed $\
I$ 
is a \No\ instance of \mmcu.
\end{proof}

\begin{step}\label{step:equivalenceclass}
If there exist $u, v \in T$ such that $u$ and $v$ lie in different connected components of $G$ or there exists a connected component of $G$ with terminals of more than $(k+l)(k+l+1)$
equivalence classes or $\R$, return \No.
\end{step}

Correctness of Step~\ref{step:equivalenceclass} follows from Lemma~\ref{lemma:equivalenceclassbound} and it can be applied in $\Oh(n^2)$ time. In the next step, take care of disconnected graphs which helps us assume that the input graph is connected. 


\begin{step}\label{step:disconnected}
Let $C_1, C_2, \ldots C_t$ be the vertices corresponding to the connected components of the graph $G$ in the input instance $\I = (G, T, R, k, l)$. For each $C_i$ and for every $k' \in \{0,1,\ldots,k\}$ and $l' \in \{0,1,\ldots,l\}$ we pass the instance $\I_{k',l'}^i = (G,T \cap C_i, \R|_{T \cup C_i}, k', l')$ to the next step. If for every $i \in \{1,\ldots,t\}$ there exists $\I_{k_i',l_i'}^i$  such that the subroutine returns \Yes\ for $\I_{k_i',l_i'}^i$ and $\sum_{i \in \{1,\ldots,t\}} k_i' \leq k$, $\sum_{i \in \{1,\ldots,t\}} l_i' \leq \l$, then return \Yes, otherwise return \No.  
\end{step}

The correctness of this step is easy to see, as after the application of Step~\ref{step:equivalenceclass}, each equivalence class is confined to at most one connected component of the graph. The step can be applied in $\Oh(n^2)$ time and gives rise to $\Oh(kln)$ subproblems. After application of Step~\ref{step:disconnected}, we can assume that $G$ is connected and that the number of equivalence classes in $G$ are bounded by $(k+l)(k+l+1)$. 

\subsection{Operations on the Graph}

\begin{definition}
 Let $\I = (G, T, \R, k,l)$ be an \mmcu~instance and let $v \in V (G) \setminus T$. By bypassing a vertex $v$ we mean the following operation: we delete the vertex $v$ from the graph
and, for any $u_1,  u_2 \in N_G (v)$, we add an edge $(u_1,u_2)$ if it is not already present in $G$.
\end{definition}

\begin{definition}
Let $\I = (G, T, \R, k,l)$ be an \mmcu~instance and let $u,v \in  T$. By identifying  vertices $u$ and $v$, we mean the following operation: we make a new set $T' = (T \setminus \{u,v\}) \cup \{x_{uv}\}$ and for each edge of the form $uw \in E(G)$ or $vw \in E(G)$, we add an edge $x_{uv}$ to $E(G)$. Observe that the operation might add parallel edges.  
\end{definition}

\begin{lemma}\label{lemma:bypass}
 Let $\I = (G, T, \R, k,l)$ be a \mmcu\ instance, let $v \in V (G) \setminus T$ and let $\I' = (G', T, \R, k,l)$ be the instance $\I$ with $v$ bypassed. Then:
\begin{itemize}
 \item if $\X=(X,F)$ is a solution to $\I'$, then $\X$ is a solution to $\I$ as well;
 \item if $\X=(X,F)$ is a solution to $\I$ and $v \notin X$ and for all $u \in N(v)$ $vu \notin F$ then $\X$ is a solution to $\I'$ as well.
\end{itemize}
\end{lemma}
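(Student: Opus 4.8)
The plan is to analyze what bypassing a vertex $v$ does to connectivity between terminals, and then verify the two implications separately. Let $G$ be the original graph and $G'$ the graph with $v$ bypassed: $V(G') = V(G) \setminus \{v\}$ and $E(G')$ is $E(G)$ restricted to $V(G')$ together with a clique on $N_G(v)$. The key observation I would establish first is a ``path transfer'' claim: for any $(X,F)$ with $X \subseteq V(G)\setminus T$, $F \subseteq E(G)$, and $v \notin X$, and for any two terminals $u_1, u_2 \in T$, the vertices $u_1, u_2$ lie in the same connected component of $G - (X,F)$ if and only if they lie in the same connected component of $G' - (X, F')$, where $F'$ is the natural image of $F$ in $G'$ (edges of $F$ not incident to $v$ survive; edges of $F$ incident to $v$ simply disappear, but the clique edges on $N_G(v)$ are present in $G'$). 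Both implications of the lemma will then follow from this equivalence by a careful bookkeeping of how $F$ relates to $F'$, using Observation~\ref{obs} where needed.

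For the first bullet, suppose $\X = (X,F)$ is a solution to $\I'$. Since $G'$ is a subgraph-like modification, I would argue that every path in $G - (X,F)$ between terminals can be ``pushed'' into $G' - (X,F)$: a path through $v$ enters and leaves $v$ via two neighbors $a, b \in N_G(v)$, and the edge $(a,b)$ is present in $G'$; since $v \notin X$ was not available to us a priori, I first note that $v$ cannot be in $X$ because $X \subseteq V(G') \setminus T$ and $v \notin V(G')$, so this is automatic. Conversely, any terminal-terminal path in $G' - (X,F)$ that uses one of the new clique edges $(a,b)$ with $a,b \in N_G(v)$ can be rerouted through $v$ in $G$ (as $v \notin X$ and the two edges $av, bv$ are in $G$, and if $v$ were already visited we can shortcut). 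Hence the connected components of $G - (X,F)$ and $G' - (X,F)$ induce exactly the same partition on $T$, so $\X$ respects $\R$ in $G$ iff it does in $G'$; the size bounds $|X| \le k$, $|F| \le l$ are unchanged. This gives that $\X$ is a solution to $\I$.

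For the second bullet, suppose $\X = (X,F)$ is a solution to $\I$ with $v \notin X$ and $vu \notin F$ for all $u \in N(v)$. Then $v$ is an isolated-from-$F$, non-deleted vertex, so $F \subseteq E(G')$ already (no edge of $F$ is incident to $v$), and I can take $F' = F$ directly. The same rerouting argument as above shows the partition of $T$ induced by connected components is identical in $G - (X,F)$ and $G' - (X,F)$: a path through $v$ in $G - (X,F)$ uses two edges $va, vb$ with $a,b \in N_G(v)$, neither in $F$, and is replaced by the clique edge $(a,b) \in E(G')$, which is not in $F'= F$; conversely clique edges are rerouted through $v$, which is available since $v \notin X$ and $va, vb \notin F$. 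So $\X$ is a solution to $\I'$.

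The main obstacle I anticipate is the careful treatment of edges of $F$ incident to $v$ in the first implication, and more generally making the rerouting argument airtight when paths might revisit $v$ or use several neighbors of $v$ — this is handled by always working with simple paths and shortcutting. A secondary subtlety is confirming that the condition $v \notin X$ in the first bullet is free (because $v \notin V(G')$) rather than an extra hypothesis; I would state this explicitly. None of these require computation, only a clean case analysis on whether a connecting path passes through $v$ / through a new clique edge.
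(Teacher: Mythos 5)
Your proof follows the same rerouting strategy as the paper's (swap a $v$-subpath $a$--$v$--$b$ for the clique edge $ab$ in $G'$ and vice versa), and your observation that $v \notin X$ is automatic in the first bullet matches the paper's intent. However, there is a real gap in your ``path transfer claim,'' and it is one the paper's own terse proof also elides. When a path in $G - (X,F)$ passes through $v$ via $a$--$v$--$b$, you replace it by the edge $ab$ in $G'$ and assert that $ab$ ``is not in $F' = F$.'' That assertion is unjustified: the hypothesis $vu \notin F$ for all $u \in N(v)$ only excludes $v$-incident edges from $F$; an edge between two neighbors of $v$ is perfectly allowed to lie in $F$.

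Concretely, for the second bullet as you argue it: let $G$ be the triangle on $\{a,v,b\}$ with pendants $x$--$a$ and $b$--$y$, $T=\{x,y\}$, $(x,y)\in\R$, $k=0$, $l=1$, and $\X = (\emptyset,\{ab\})$. Then $\X$ is a solution to $\I$ (the path $x$--$a$--$v$--$b$--$y$ keeps $x,y$ together), $v\notin X$, and no $v$-incident edge is in $F$; yet $G'$ is just the path $x$--$a$--$b$--$y$ and $G'-\X$ disconnects $x$ from $y$, so $\X$ is not a solution to $\I'$. The first bullet has the dual problem: a solution to $\I'$ may delete a newly created clique edge $ab \notin E(G)$, in which case deleting it from $G$ is vacuous and the original $a$--$v$--$b$ path survives in $G-\X$. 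So neither direction of your path-transfer claim holds as stated. Repairing this requires either treating the clique edges added during bypassing as undeletable (or adding $l+1$ parallel copies), or strengthening the lemma's hypotheses to also exclude from $F$ edges between neighbors of $v$; the rerouting argument alone, as you (and the paper) present it, is not airtight.
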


\begin{proof}
We first prove the first item in the lemma. Suppose  $\X=(X,F)$ is a solution to $\I'$ but it is not a solution to $\I$. This implies that either there exists $(x,y)\in \R$ and $x$ and $y$ belong to two different connected components of $G-\X$ or $(x,y)\notin \R$ and they are in the same connected component of $G-\X$.  In the first case we know that $x$ and belongs to the same connected  component of  $G'-\X$ and thus we know that the two connected components of $G-\X$ 
which contain $x$ and $y$ has neighbors of $v$. This contradicts that  $x$ and $y$ belong to two different connected components of $G-\X$. For the later case if there is a path containing $x$ and $y$ then it must contain $v$ but then in $G'$ we have an edge between the end-points of $v$ on this path, which in turn would imply a path between $x$ and $y$ in $G'-\X$. One can prove the second statement along the similar lines. 
\end{proof}

\begin{lemma}\label{lemma:certainedge}
 Let $\I = (G, T, \R, k,l)$ be a \mmcu\ instance and let $u, v \in T$ be two different terminals with $(u, v) \notin \R$, such that $uv \in E(G)$, then for any solution $\X = (X,F)$ of $\I$, we have $uv \in F$.
\end{lemma}

 The proof of the Lemma~\ref{lemma:certainedge} follows from the fact that any solution must delete the edge $uv$ to disconnect $u$ from $v$. The proof of the next lemma follows by simple observation that $u$ and $v$ have at least $k+l+1$ internally 
vertex disjoint paths or from the fact that $(u, v) \in \R$ and thus after deleting the solution they must belong to the same connected component and thus every minimal solution does not use the edge $uv \in E(G)$.  

\begin{lemma}\label{lemma:identify}
 Let $\I = (G, T, \R, k,l)$ be a \mmcu\ instance and let $u, v \in T$ be two different terminals with $(u, v) \in \R$, such that $uv \in E(G)$ or $|N_G(u) \cap N_G(v)| > k+l$. Let $\I'$ be instance $\I$ with terminals $u$ and $v$ identified. Then set of minimal solutions of $\I$ and $\I'$ is the same.  
\end{lemma}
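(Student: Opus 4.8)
The plan is to show that $\X = (X,F)$ is a minimal solution to $\I$ if and only if it is a minimal solution to $\I'$, where $\I'$ is obtained by identifying the terminals $u,v$ into a single vertex $x_{uv}$. The first observation I would make is that, under either hypothesis ($uv \in E(G)$ or $|N_G(u)\cap N_G(v)| > k+l$), every minimal solution of $\I$ must place $u$ and $v$ in the \emph{same} connected component of $G - \X$. When $uv \in E(G)$ this is immediate since $(u,v)\in\R$ forces them to be connected and $u,v \notin X$ (terminals are undeletable). When $|N_G(u)\cap N_G(v)| > k+l$, note that each common neighbour $w$ gives a $u$--$v$ path $u\,w\,v$ of length two, and these $k+l+1$ paths are internally vertex disjoint; since $|X| \le k$ and $|F| \le l$, at least one such path survives in $G - \X$, so $u$ and $v$ are connected there too. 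The same argument shows that in $\I'$ no minimal solution can affect the vertex $x_{uv}$ in a way that would use any of these length-two paths — and in fact, by Observation~\ref{obs}, a minimal solution never has edges of $F$ incident to $X$, so we may freely talk about which component contains $\{u,v\}$.

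Next I would set up the correspondence between subgraphs. Because identifying $u$ and $v$ only merges two vertices that always lie in the same component under any minimal solution, the graph $G' - \X$ is exactly $G - \X$ with $u$ and $v$ contracted; since they are in the same component, contracting them does not change the partition of $T$ (other than replacing the block containing $u,v$ by the same block with $x_{uv}$ in place of $u,v$) into connected components, nor does it change which pairs of terminals are joined by a path. Concretely: for terminals $p, q$ distinct from $u,v$, $p$ and $q$ are connected in $G - \X$ iff they are connected in $G' - \X$; and $p$ is connected to $u$ (equivalently to $v$, by the previous paragraph) in $G - \X$ iff $p$ is connected to $x_{uv}$ in $G' - \X$. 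Since $\R$ and $\R'$ agree under the natural bijection of terminal sets (the block of $u$ equals the block of $v$, and becomes the block of $x_{uv}$), this shows $\X$ respects $\R$ in $\I$ iff it respects $\R'$ in $\I'$; the size bounds $|X|\le k$, $|F|\le l$ are syntactically identical. Hence $\X$ is a solution to $\I$ iff it is a solution to $\I'$.

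Finally, minimality transfers because the solution sets coincide \emph{and} the partial order $\X' \le \X$ is defined purely in terms of set inclusion on $V(G)\setminus T$ and on $E(G)$, which is unaffected by the identification (the edge sets of $G$ and $G'$ are in the obvious correspondence, with parallel edges created by the identification matched to the original edges at $u$ or at $v$, and no edge incident to $u$ or $v$ lies in any minimal $F$ by the first paragraph when $uv\in E(G)$, so the edge correspondence is clean on the relevant part). Thus $\X$ is $\le$-minimal among solutions to $\I$ iff it is $\le$-minimal among solutions to $\I'$, giving the claimed equality of minimal solution sets.

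The main obstacle I anticipate is making the ``$G' - \X = (G-\X)/\{u{=}v\}$ and contraction preserves the component structure'' step fully rigorous in the presence of the edge case where an edge of $F$ might a priori be incident to $u$ or $v$: one must invoke both Observation~\ref{obs} and the ``same component'' claim to rule out that a contracted edge was deleted on one side but not the other, and to confirm the edge sets of $G$ and $G'$ really are in bijection on the support of minimal solutions. Handling the parallel edges introduced by identification carefully — and checking they never need to be in a minimal $F$ — is the delicate bookkeeping.
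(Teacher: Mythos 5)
Your overall strategy — establish that every minimal solution keeps $u$ and $v$ in the same component of $G-\X$, interpret identification as a contraction of two vertices known to be connected, and conclude that component structure (and hence solutionhood and minimality) is preserved — is exactly the route the paper's (very terse) sketch takes.

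However, there is a genuine gap in the bookkeeping step, and it is not cosmetic. You assert that ``$\X$ is a solution to $\I$ iff it is a solution to $\I'$'' and later that ``the solution sets coincide.'' That is not true as stated: a solution $(X,F)$ to $\I$ with $uv \in F$ is a perfectly valid solution to $\I$, but it does not correspond to anything in $\I'$, because $uv$ does not exist in $G'$ (identification makes it a self-loop, which is discarded). What the lemma actually needs is that every \emph{minimal} solution of $\I$ has $uv \notin F$. This is what makes the edge-bijection $E(G)\setminus\{uv\} \leftrightarrow E(G')$ a clean correspondence on minimal solutions. You never prove this fact; you instead invoke the much stronger claim ``no edge incident to $u$ or $v$ lies in any minimal $F$,'' which is simply false: take $u\sim v\sim z\sim w\sim u$ with $T=\{u,v,w,z\}$, $\R$-classes $\{u,v\}$ and $\{w,z\}$, $k=0$, $l=2$; then $(\emptyset,\{uw,vz\})$ is a minimal solution whose $F$ contains an edge incident to $u$. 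The correct (and easy) argument for the fact you need is: since $\X$ is a solution and $(u,v)\in\R$, $u$ and $v$ are in the same component of $G-\X$; if $uv\in F$, then $(X,F\setminus\{uv\})$ yields the same component structure (re-inserting $uv$ joins two vertices already connected), so $\X$ was not minimal. You should state and use this, rather than the false blanket claim, and you should drop the assertion that the full solution sets coincide — only the minimal solution sets do, via the edge bijection restricted to $F\not\ni uv$.

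One further small point: your ``first observation'' for the case $uv\in E(G)$ only uses $(u,v)\in\R$, which proves connectedness of $u,v$ for solutions to $\I$; for the converse direction (a solution to $\I'$ is a solution to $\I$) the argument must instead use the presence of the edge $uv$ itself (since $uv\notin F$) or, in the other hypothesis, the $k+l+1$ internally disjoint paths. You do set up the disjoint-paths count correctly, but the two directions should be disentangled so it is clear which hypothesis is doing work where.
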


%

\begin{lemma}\label{lemma:deleteterminals}
Let $\I = (G, T, \R, k,l)$ be an \mmcu\ instance and let $V' = \{v_1, v_2, \ldots, v_t\} \subseteq T$ be different terminals of the same equivalence class of $\R$, pairwise nonadjacent and such that
$N_G(u_1) = N_G(u_2) = \cdots =N_G(u_t) \subseteq V (G) \setminus T$  and $t >l+2$. Let $\I'$ be obtained from $\I$ by deleting all but $l+2$ terminals in $U$ (and all pairs that contain the deleted terminals in $\R$). Then the set of minimal solutions to $\I$ and $\I'$ are equal.
\end{lemma}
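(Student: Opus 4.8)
The plan is to show the two sets of minimal solutions coincide by proving a two-way inclusion, exactly in the spirit of Lemmas~\ref{lemma:bypass} and~\ref{lemma:identify}. First I would observe that since the $v_i$'s are pairwise nonadjacent, lie in one equivalence class, and share a common neighbourhood $N := N_G(v_1) = \cdots = N_G(v_t) \subseteq V(G)\setminus T$, they are completely interchangeable: any automorphism of $G$ permuting the $v_i$'s and fixing everything else preserves $\R$ (the relation only records which class each terminal is in, and all the $v_i$'s are in the same class). So the instance has a large symmetry group, and the intuition is that a solution can never benefit from cutting off more than a bounded number of these twin terminals.

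\textbf{Easy direction ($\I' \Rightarrow \I$, i.e. every minimal solution to $\I'$ extends to one for $\I$).} Here I would take a minimal solution $\X = (X,F)$ of $\I'$. By Observation~\ref{obs}, no edge of $F$ is incident to $X$. The deleted terminals $v_{l+3},\dots,v_t$ each have neighbourhood exactly $N$; I claim that in $G - \X$ (the graph for instance $\I$), each such $v_i$ lands in the same component as the surviving $v_j$'s of its class. The key counting point is: $X$ removes at most $k$ vertices of $N$ and $F$ removes at most $l$ of the $t-1 \ge l+2$ edges from $v_i$ to $N$; so at least one surviving edge from $v_i$ reaches a vertex $w \in N \setminus X$ that is also still joined (directly) to one of the retained twins $v_j$ — unless \emph{all} edges from the retained twins to $w$ have also been cut, but since there are $\ge l+2$ retained twins and only $l$ edges in $F$, for \emph{every} $w \in N\setminus X$ some retained twin $v_j$ keeps its edge $v_jw$. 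Hence $v_i$ is connected (via some $w$) to a retained twin, and since the retained twins' mutual connectivity and their connectivity to all other terminals is already correct in $G-\X$ by assumption, adding back the $v_i$'s does not violate any constraint. So $\X$ is a solution to $\I$; and minimality transfers because deleting the extra terminals only added new $\R$-pairs that $\X$ already satisfies, so any $\X'' \lneq \X$ that were a solution to $\I$ would restrict to a solution to $\I'$, contradicting minimality of $\X$ for $\I'$.

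\textbf{Harder direction ($\I \Rightarrow \I'$).} Let $\X = (X,F)$ be a minimal solution to $\I$; by Observation~\ref{obs}, $F$ is not incident to $X$. I need to show $\X$ is still a solution to $\I'$, which requires checking that \emph{no} terminal pair constraint surviving in $\I'$ is violated by $\X$ in $G - \X$ restricted to $V(G) \setminus \{v_{l+3},\dots,v_t\}$. The only way a pair could be newly violated is if two terminals $a,b$ with $(a,b)\in\R$ were connected in $G-\X$ only through a path passing through some deleted twin $v_i$; but any path through $v_i$ enters and leaves via $N$, so it can be rerouted through a retained twin $v_j$ provided some $w, w' \in N \setminus X$ on that path still have their edges to some retained twin — and as in the easy direction, the retained-twin count $\ge l+2 > l = |F|$ guarantees that for every $w \in N\setminus X$ at least one retained twin keeps its edge to $w$, so in fact every $w\in N\setminus X$ has \emph{all its} cross-twin connectivity realisable through retained twins. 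Thus any path through a deleted $v_i$ has a detour of the same endpoints avoiding deleted twins, so connectivity among the retained terminals is unchanged, and separation is only ever easier after removing vertices. Hence $\X$ solves $\I'$. For minimality: if some $\X'' \lneq \X$ solved $\I'$, then by the easy direction $\X''$ would solve $\I$, contradicting minimality of $\X$ for $\I$.

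\textbf{Main obstacle.} The delicate point is the rerouting argument in the harder direction: I must be careful that a path in $G-\X$ using a deleted twin $v_i$ might use \emph{two} of its incident edges $wv_i$ and $v_iw'$ with $w \ne w'$, and I need both $w$ and $w'$ to be reachable-from-a-retained-twin to splice in a retained $v_j$; this is exactly where the threshold $t > l+2$ (rather than $t > l+1$) is used, since $F$ has at most $l$ edges and I must preserve a retained-twin edge to every vertex of $N \setminus X$ simultaneously. I also need to double-check the bookkeeping for $\R$ after deleting terminals — namely that restricting $\R$ to the retained terminals does not accidentally merge or split classes — but this is immediate since we delete whole terminals together with all their pairs, leaving the class structure on the surviving terminals intact.
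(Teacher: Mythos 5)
Your two-way inclusion and rerouting plan is the right shape and matches the paper at a high level, but the hard direction ($\I \Rightarrow \I'$) has a genuine gap in the counting. When a path in $G-\X$ uses a deleted twin $v_i$, entering at $w$ and leaving at $w'$, you need a single retained twin $v_j$ that is still adjacent to \emph{both} $w$ and $w'$ in $G'-\X$ (equivalently, you need all of $N\setminus X$ to lie in one connected component of $G'-\X$). The pigeonhole you actually carry out --- ``for every $w\in N\setminus X$ at least one retained twin keeps its edge to $w$'' --- only produces a possibly different retained twin for each $w$, and you never show that those retained twins are themselves connected in $G'-\X$; so the jump to ``every $w$ has all its cross-twin connectivity realisable through retained twins'' does not follow (a bipartite graph with minimum degree $2$ on one side can still be disconnected). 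The count you need is per twin, not per vertex of $N$: the twins are pairwise nonadjacent, so each edge of $F$ meets at most one twin, hence at most $l$ of the $l+2$ retained twins meet any edge of $F$, leaving at least two retained twins that keep \emph{every} edge to $N\setminus X$; these universal twins give the one-component claim and the rerouting immediately. Your word ``simultaneously'' in the last paragraph gestures at this, but the argument you actually write out is the weaker per-$w$ one.

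The paper sidesteps the rerouting bookkeeping by first deriving a cleaner structural fact from minimality of $\X$: $F$ contains no edge incident to any twin at all. For $w\in X\cap N$ this is Observation~\ref{obs}; for $w\in N\setminus X$, the weak pigeonhole already places $w$ in the unique component of $G-\X$ containing all the twins, so any edge $wv_j\in F$ is redundant and contradicts minimality. With this in hand, any retained twin serves for the splice and no further counting is needed. Two smaller slips: in the easy direction, ``$F$ removes at most $l$ of the $t-1\geq l+2$ edges from $v_i$ to $N$'' does not parse ($v_i$ has $|N|$ edges to $N$, and more to the point $v_i\notin V(G')$ while $F\subseteq E(G')$, so $F$ removes \emph{no} edge at $v_i$ --- this is the fact you should state explicitly); and the final line of the hard direction should read ``contradicting minimality of $\X$ for $\I$'', not $\I'$.
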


\begin{proof}
Without loss of generality, let the set deleted terminals be $U' := \{u_{l+3}, \ldots , u_t\}$ and let $N:= N_G(u_1) = N_G(u_2) = \cdots =N_G(u_t)$. Let $u,v \in V(G) \setminus U'$. We claim that for any minimal solution $\X = (X,F)$ of $\I$, $u$ and $v$ are in the same connected component of $G-\X$ if and only if they are in the same connected component of $G'-\X$. The backward direction is trivial. To show the forward direction, we just need to show that $F$ does not contain any edges incident on $U$, because then if a path visits a vertex in $U'$, then we can redirect it via one of $u_i$'s which remain in the graph. For $v \in X \cap N$, by Observation~\ref{obs}, we have that $F$ does not contain any edges incident on $v$. For $v \in N \setminus X$, there are $u_a, u_b \in U \setminus U'$ which are adjacent to $v$ in $G-\X$. This gives us that $v$ lies in the same connected component of $G-\X$ as $U$, and hence a minimal solution does not contain any of the edges between $x$ and $U$. This shows that $\X$ 
is a minimal solution for $\I'$ as well. 

For the converse, let $\X$ be a minimal solution to $\I'$. We just need to show that if $u, v \in T$ such that $v \in T \setminus U'$ and $u \in U'$, they lie in same connected component of $G-\X$ if and only if $(u,v) \in \R$. Since $l \geq 0$, we have that $|U \setminus U'| \geq 2$.
 Now it is easy to see that $N \subsetneq X$ because otherwise the vertices of $U \setminus U'$ will not be in the same connected component of  $G' - \X$.  Hence, there exists $x \in N \setminus X$. Since $|N_{G'}(X) \cap U'| = l+2$, $v$ is adjacent to at least $2$ vertices of $U'$ in $G' - \X$. Without loss of generality, let $u_1, u_2 \in U \setminus U'$ be these two vertices. Now let $v \in T \setminus U'$ and  $u \in U'$. If $(v, u_1) \in \R$, then $v$ and $x$ belong to the same connected component of $G'-\X$ and hence $v$ and $u$ belong to same connected component of $G-\X$. Similarly if $(v, u_1) \notin \R)$, then $v$ and $x$ belong to the different connected components of $G'-\X$ and hence $v$ and $u$ belong to different connected components of $G-\X$. This concludes the proof of the lemma.
\end{proof}

\begin{lemma}\label{lemma:deleteparallel}
 Let $\I = (G, T, \R, k,l)$ be an \mmcu\ instance and let $uv \in E(G)$ be an edge with multiplicity more than $l+1$. Then for any minimal solution $\X=(X,F)$ of $\I$, $F$ does not contain any copies of $uv$.  
\end{lemma}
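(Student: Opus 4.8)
The plan is to argue by contradiction, exploiting minimality directly. Suppose $\X=(X,F)$ is a minimal solution and that $F$ contains at least one copy of the edge $uv$. Let $F'$ be obtained from $F$ by removing \emph{all} copies of $uv$, and set $\X'=(X,F')$. I will show $\X'$ is again a solution, and since $\X'\neq\X$ and $\X'\leq\X$, this contradicts minimality.

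The first step is to observe that $F$ cannot contain all copies of $uv$: the multiplicity of $uv$ is more than $l+1$, i.e. at least $l+2$, whereas $|F|\leq l$. Hence at least one copy of $uv$ survives in $G-\X$, so $u$ and $v$ lie in the same connected component of $G-\X$.

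The second, and essentially the only substantive, step is to compare the connected components of $G-\X$ and $G-\X'$. The graph $G-\X'$ is obtained from $G-\X$ by adding back the copies of $uv$ that belonged to $F$; every such re-added edge joins $u$ and $v$, which are already in the same connected component of $G-\X$. Therefore adding these parallel edges does not merge any components and does not change any component, so $G-\X$ and $G-\X'$ have exactly the same partition of the vertex set into connected components. In particular, for all $a,b\in T$, $a$ and $b$ are in the same component of $G-\X'$ if and only if they are in the same component of $G-\X$, which (as $\X$ is a solution) holds if and only if $(a,b)\in\R$. Since also $|X|\leq k$ and $|F'|\leq|F|\leq l$, the tuple $\X'$ is a solution to $\I$.

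Finally, because $F$ contained at least one copy of $uv$ while $F'$ contains none, we have $F'\subsetneq F$ and hence $\X'\leq\X$ with $\X'\neq\X$, contradicting the minimality of $\X$. Thus $F$ contains no copy of $uv$. I do not expect any real obstacle here; the only point requiring a line of care is the claim that re-inserting parallel edges between already-connected vertices leaves the component structure untouched, and one might optionally note (via Observation~\ref{obs}) that if $u$ or $v$ lay in $X$ then $F$ could contain no copy of $uv$ in the first place, so one may harmlessly assume $u,v\notin X$.
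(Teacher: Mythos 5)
Your proof is correct and follows essentially the same route as the paper's: use $|F|\leq l$ against multiplicity $>l+1$ to conclude a copy of $uv$ survives in $G-\X$, so dropping the copies of $uv$ from $F$ leaves the connected components unchanged, contradicting minimality; you also include the paper's side remark that $\{u,v\}\cap X\neq\emptyset$ is handled by Observation~\ref{obs}.
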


\begin{proof}
 If $\{u, v\} \cap X \neq \emptyset$, then by Observation~\ref{obs}, we have that none of the copies of $uv$ are in $F$. Otherwise, $F$ contains at most $l$ copies of edge $uv$. Let $\X' = (X, F \setminus \{uv\})$. Then we have that for any two $x,y \in V(G)$, $x$ and $y$ are adjacent in $G-\X$ if and only if they are adjacent in $G-\X'$, contradicting the minimality of $\X$.  
\end{proof}

a minimal solution for $\I'$ as well. 
%
%
%

\subsection{Borders and Recursive Understanding}
In this section, we define the bordered problem and describe the recursive phase of the algorithm. Let $\I = (G, T, \R, k,l)$ be an \mmcu\ instance and let $T_b \subseteq V(G) \setminus T$ be a set of border terminals, where $|T_b| \leq 2(k+l)$. Define $\I_b = (G, T, \R, k,l, T_b)$ to be an instance of the bordered problem. By $\mathbb{P}(\I_b)$ we define the set of all tuples $\PP=(X_b, E_b, \R_b, k',l')$, such that $X_b \subseteq T_b$, $E_b$ is an equivalence relation on $T_b \setminus X_b$, $\R_b$ is an equivalence relation on $T \cup (T_b \setminus X_b)$ such that $E_b \subseteq \R_b$ and $\R_b|_T = \R$, $k'\leq k$ and $l' \leq l$. For a tuple  $\PP=(X_b, E_b, \R_b, k',l')$, by $G_{\PP}$ we denote the graph $G \cup E_b$, that is the graph $G$ with additional edges $E_b$. 

We say that as tuple $\X=(X,F)$ is a solution to $(\I_b, \PP)$ where $\PP=(X_b, E_b, \R_b, k',l')$ if $|X| \leq k'$, $|F| \leq l'$ and for all $u,v \in T \cup (T_b \setminus X_b)$, $u$ and $v$ belong to the same connected component of $G_{\PP}-(X,F)$ if and only if $(u,v) \in \R_b$. We also say that $\X$ is a solution to $\I_b = (G, T, \R, k,l, T_b)$ whenever $\X$ is a solution to $\I = (G, T, \R, k,l)$. Now we define the bordered problem as follows. 

\probnamedefn{\bmmcufull(\bmmcu)}{An \mmcu\ instance $\I = (G, T, \R, k,l)$ with $G$ being connected and a set $T_b \subseteq V(G) \setminus T$ such that $|T_b| \leq 2(k+l)$; denote $\I_b = (G, T, \R, k,l, T_b).$}{For each $\PP=(X_b, E_b, \R_b, k',l') \in \mathbb{P}(\I_b)$, output a $\sol_\PP = \X_\PP$ being a minimal solution to $(\I_b, \PP)$, or $\sol_\PP =\bot$ if no solution exists.}

It is easy to see that \mmcu\ reduces to \bmmcu, by putting $T_b = \emptyset$. Also, in this case, any answer to \bmmcu\ for $\PP=(\emptyset, \emptyset, \R, k,l)$ returns a solution for \mmcu\ instance. To bound the size of the solutions returned for an instance of \bmmcu\ we observe the following. 

\begin{align*}
|\mathbb{P}(\I_b)| &\leq (k+1)(l+1)(1+|T_b|(|T_b|+(k+l)(k+l+1)))^{|T_b|} \\
 &\leq (k+1)(l+1)(1+2(k+l)^2(k+l+3))^{2(k+l)}  \\
 &= 2^{\Oh((k+l) \log (k+l))} 
\end{align*}

This is true because $\R_b$ has at most $(k+l)(k+l+1) +|T_b|$ equivalence classes, $E_b$ has at most $T_b$ equivalence classes, each $v \in T_b$ can either go to $X_b$ or choose an equivalence class in $\R_b$ and $E_b$, and $k'$ and $l'$ have $k+1$ and $l+1$ possible values respectively. Let $q= (k+2l)(k+1)(l+1)(1+2(k+l)^2(k+l+3))^{2(k+l)}+k+l$, then all output solutions to a \bmmcu\ instance $\I_b$ affect at most $q-(k+l)$ vertices in total. Now we are ready to prove the lemma which is central for the recursive understanding step. 

\begin{lemma}\label{lemma:recursive}
Assume we are given a \bmmcu~instance $\I_b = (G, T, \R, k,l, T_b)$ and two disjoint sets of vertices $Z,V^* \subseteq V (G)$, such that $|Z| \leq k+l$, $Z \cap T = \emptyset$, $Z_W := N_G(V^*) \subseteq Z$, $|V^* \cap T_b| \leq  k+l$ and the subgraph of $G$ induced by $W := V^* \cup Z_W$ is connected. Denote $G^* = G[W ]$, $T_b^* = (T_b \cup Z_W ) \cap W$, $T^* = T \cap W$, $\R^* = \R|_{T \cap W}$ and $\I^* = (G^* , T^* , R^*, k, l, T_b^*)$. Then $\I^*$ is a proper \bmmcu~instance. Moreover, if we denote by $(\sol_{\PP^*}^*)_{\PP^* \in \mathbb{P}(\I_b^*)}$ an arbitrary output to the \bmmcu~instance $\I_b^*$ and
$$U(\I_b^*) = T_b^* \cup  \{v \in V(G)\ | \ \PP^* \in  \mathbb{P}(\I_b^*), \sol_{\PP^*}^*= \X_{\PP^*}^* \neq \bot   \text{ and } \X_{\PP^*}^* \text{ affects v}  \},$$
then there exists a correct output $(\sol_\PP)_{\PP \in \mathbb{P}(\I_b)}$ to the \bmmcu~instance $\I_b$ such that whenever $\sol_\PP = \X_\PP \neq \bot$ and $\X_\PP$ is a minimal solution to $(\I_b,\PP)$ then $V(\X_\PP) \cap V^* \subseteq U(\I_b^*)$.
\end{lemma}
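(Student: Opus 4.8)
The core idea is a replacement argument: take a correct output $(\sol_\PP)_{\PP}$ to $\I_b$ (which exists because $\I^*$, as I will first verify, is a proper instance and hence has some output); for each $\PP$ with $\sol_\PP = \X_\PP \neq \bot$ a minimal solution to $(\I_b,\PP)$, I want to modify $\X_\PP$ inside $V^*$ so that whatever it does there is ``realized'' by one of the recursively-computed solutions $\X^*_{\PP^*}$ for a suitably chosen $\PP^*\in\mathbb P(\I^*_b)$, while not changing the behaviour of $\X_\PP$ outside $W$. Once I can do this for every $\PP$, the modified family is still a correct output and each modified $\X_\PP$ has $V(\X_\PP)\cap V^*\subseteq U(\I^*_b)$ by construction, which is exactly what is claimed.

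\textbf{Step 1: $\I^*$ is proper.} I need $G^*=G[W]$ connected (given), $T^*\cap T_b^* = \emptyset$ (holds since $Z\cap T=\emptyset$ and $T_b\cap T=\emptyset$), and $|T_b^*|\le 2(k+l)$. For the last point, $T_b^* \subseteq (T_b\cap W)\cup Z_W$, and $|T_b\cap W| \le |V^*\cap T_b| + |Z_W\cap T_b| \le (k+l) + |Z| \le 2(k+l)$ after absorbing $Z_W\subseteq Z$; I should check the bookkeeping carefully since $Z_W$ could contain border terminals already, but the bound $|T_b^*|\le |V^*\cap T_b|+|Z|\le 2(k+l)$ goes through. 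So $\I^*_b$ is a legitimate \bmmcu\ instance and the recursive call returns some output.

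\textbf{Step 2: the local pattern and the replacement.} Fix $\PP=(X_b,E_b,\R_b,k',l')\in\mathbb P(\I_b)$ with $\X_\PP=(X,F)$ a minimal solution to $(\I_b,\PP)$. Look at how $\X_\PP$ interacts with $W$: let $X^* = X\cap V^*$, $F^* = F\cap E(G[W])$ — by Observation~\ref{obs}, since $\X_\PP$ is minimal, no edge of $F$ is incident to $X$, so edges of $F$ with exactly one endpoint in $Z_W$ and the other outside are clean. The ``interface'' $Z_W$ is a separator between $V^*$ and the rest; the only way $V^*$ communicates with $G-W$ is through $Z_W\setminus X$. I will encode the relevant local data as a tuple $\PP^*=(X_b^*,E_b^*,\R_b^*,k'',l'')$ where $X_b^* = T_b^*\cap X$ (the border-terminal part of $\X_\PP$ inside $W$, which includes which vertices of $Z_W$ are deleted), $E_b^*$ records which pairs of surviving vertices of $T_b^*$ are joined by a path that leaves $W$ (this is the information the outside ``promises'' back to the piece, analogous to how $E_b$ feeds promised edges into $G_\PP$), $\R_b^*$ records the actual partition of $T^*\cup(T_b^*\setminus X_b^*)$ into connected components induced by $\X_\PP$ on $G[W]\cup E_b^*$, and $k''=|X^*|$, $l'' = |F^*|$. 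I must check $\PP^*\in\mathbb P(\I^*_b)$: $E_b^*\subseteq\R_b^*$ (a promised edge certainly merges its endpoints), $\R_b^*|_{T^*}=\R^*$ (because $\X_\PP$ is a genuine solution so terminals of $T$ inside $W$ end up related iff $\R$ says so, and within $W$ plus outside-promises this is witnessed correctly — this needs the argument that the outside connections among $W$-terminals are faithfully captured by $E_b^*$), and $k''\le k$, $l''\le l$.

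\textbf{Step 3: splice in the recursive solution.} Since $\PP^*\in\mathbb P(\I^*_b)$, the recursive output gives $\sol_{\PP^*}^* = \X^*_{\PP^*}$, which is $\bot$ only if $(\I^*_b,\PP^*)$ has no solution — but $(X^*,F^*)$ \emph{is} such a solution, so $\X^*_{\PP^*}=(X^{\dagger},F^{\dagger})\neq\bot$ is a minimal solution with $|X^\dagger|\le k''=|X^*|$ and $|F^\dagger|\le l''=|F^*|$. Now define the new global solution $\X_\PP' = ((X\setminus V^*)\cup X^\dagger,\ (F\setminus E(G[W]))\cup F^\dagger)$. The crux is to prove $\X_\PP'$ is still a solution to $(\I_b,\PP)$: outside $W$ nothing changed; the separator $Z_W$ behaves identically with respect to ``deleted or not'' because $X_b^*$ was matched exactly; and the connectivity pattern among $T\cup(T_b\setminus X_b)$ is unchanged because (i) within $G[W]\cup E_b^*$ the new local solution produces exactly the partition $\R_b^*$ that the old one did — that is what $\R_b^*$ was defined to be and that $\X^*_{\PP^*}$ solves — and (ii) $E_b^*$ faithfully models the outside paths, so gluing $G[W]$-internal connectivity to the outside along $Z_W\setminus X_b^*$ reproduces the original global connectivity. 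Finally $|X_\PP'|\le|X|\le k'$ and $|F_\PP'|\le|F|\le l'$. Replacing each $\X_\PP$ by a minimal sub-solution of $\X_\PP'$ keeps correctness, and now $V(\X_\PP')\cap V^* \subseteq V(X^\dagger)\cup V(F^\dagger)\subseteq U(\I^*_b)$ as desired.

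\textbf{Main obstacle.} The delicate point is Step 2–3's claim that the single tuple $\PP^*$ — in particular the pair $(E_b^*,\R_b^*)$ — genuinely captures \emph{all} the interaction between $V^*$ and the outside, so that swapping $(X^*,F^*)$ for $(X^\dagger,F^\dagger)$ cannot create or destroy any connection among the global terminals. One has to argue that every path in $G-\X_\PP$ between global terminals decomposes at $Z_W$ into ``inside'' and ``outside'' arcs, that the inside arcs are summarized by $\R_b^*$ and the outside arcs by $E_b^*$, and conversely that $\X^*_{\PP^*}$ respecting $\R_b^*$ (with the $E_b^*$ edges added) lets one reassemble exactly the same global partition. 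This is the standard, but technically fussy, ``the border data is a complete interface'' argument; getting the definitions of $E_b^*$ and $\R_b^*$ precisely right (and checking the size bounds $|T_b^*|\le 2(k+l)$, $k''\le k$, $l''\le l$ so that $\PP^*$ is legal) is where the real work lies. Everything else is bookkeeping, with Observation~\ref{obs} handling the annoyance of edges incident to deleted vertices.
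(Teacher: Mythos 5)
Your proposal is correct and follows essentially the same replacement strategy as the paper: encode the trace of a global solution on $W$ as a tuple $\PP^*=(X_b^*,E_b^*,\R_b^*,k^*,l^*)$, observe that $(X\cap W, F\cap E(G[W]))$ is a solution to $(\I_b^*,\PP^*)$ so the recursive output $\X^*_{\PP^*}\neq\bot$ and is no larger, splice it into the global solution, and argue correctness by decomposing any terminal-to-terminal path at the separator $Z_W$. The one substantive thing you left as a sketch (``the border data is a complete interface'') is precisely the content of the paper's last two paragraphs: paths are cut at $D=T\cup(T_b\setminus X_b)\cup Z_W$ so each subpath lies entirely in the outside graph $G_\PP'$ (unchanged, since $X\setminus V^*=X'\setminus V^*$ and $F\setminus E(G[W])=F'\setminus E(G[W])$) or entirely in $G_\PP[W]$ (replaced by a path in $G^*_{\PP^*}$ guaranteed by $\R_b^*$, whose $E_b^*$-edges unwind to outside connectivity); your outline matches this exactly. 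Two cosmetic remarks: the paper defines $\R_b^*$ directly from connectivity in the full $G_\PP-\X_\PP$ and then proves $(X\cap W,F\cap E(G[W]))$ realizes it, while you define $\R_b^*$ locally and must separately verify $\R_b^*|_{T^*}=\R^*$ --- equivalent but worth noting; and your explicit pass to a minimal sub-solution at the end is slightly more careful than what the paper writes, since the \bmmcu\ output format requires minimality.
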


\begin{proof}
To see that $\I_b^*$ is a proper \bmmcu\ instance, we observe that $G^* = G[W]$ is connected, $|Z_W| \leq |Z| \leq k+l$ and $|V^* \cap T_b| \leq k+l$, and hence $|T_b^*| \leq 2(k+l)$.

To show the second part of the lemma, let us assume $\PP = (X_b, E_b, \R_b, k', l') \in \mathbb{P}(\I_b)$. Let $\X_\PP = (X,F)$ be a solution to $(\I_b, \PP)$. We show that there exists another solution $\X_\PP'=(X',F')$ to  $(\I_b, \PP)$ such that $|X'| \leq |X|$, $|F'| \leq |F|$ and $V(\X_\PP') \cap V^* \subseteq U(\I_b^*)$. To that end, we define a tuple $\PP^* = (X_b^*, E_b^*, \R_b^*, k^*,l^*)$ as follows. Let $F'' := F \setminus E(G[W])$, $X'' := (X \setminus V^*)$ and $\X_\PP'' =  (X'', F'')$. Let $G_\PP'$ be the graph such that $V(G_\PP') = V(G) \setminus W$ and $E(G_\PP') = E(G_\PP - V^*) \setminus E(G_\PP[Z_W]$. Observe that $V(G^*) \cap V(G_\PP') = Z_W$ and $E(G^*) \cap E(G_\PP') = \emptyset$.
\begin{itemize}
 \item $X_b^* = X \cap T_b^*$. 
 \item $E_b^*$ is an equivalence relation on $T_b^* \setminus X_b^*$ such that for $(u,v) \in E_b^*$ if and only if $u$ and $v$ are in the same connected component of $G_\PP' - \X_\PP''$. 
 \item $\R_b^*$ is an equivalence relation on $T^* \cup (T_b^* \setminus X_b^*)$ such that $(u,v) \in \R_b^*$ if and only if $u$ and $v$ are in the same connected component of $G_\PP - \X_\PP$. 
 \item $k^* := |X \cap W|$ and $l^*:= |F \cap E(G[W])|$.
\end{itemize}

Since $|X| \leq k' \leq k$ and $|F| \leq l' \leq l$, we have that $k^* \leq k$ and $l^* \leq l$.
 Also, $E_b^* \subseteq \R_b^*$ because they are both corresponding to relation of being in the same connected component, but the graph considered for defining $E_b^*$ is a subgraph of the one considered for defining $\R_b^*$. Hence we have that $\PP^* \in \mathbb{P}(\I_b^*)$. 
We claim that $(X \cap W, F \cap E[(G[W]))$ is a solution to $(\I_b^*, \PP^*)$. By definition, we have that $|X \cap W| \leq k^*$ and $|F \cap E[(G[W])| \leq l^*$. Also, by definition of $X_b^*$, we have that $X\cap W \cap T_b^* = X_b^*$. 
Now look at two vertices $u, v \in T^* \cup (T_b^* \setminus X_b^*)$. By definition, we have that  $ (u,v) \in \R_b^*$ if and only if there exists a path $P$ connecting $u$ and $v$ in $G_\PP - \X_\PP$. We claim that such a path $P$ exists if and only if there exists a path $P^*$ connecting $u$ and $v$ in $G_{\PP^*}^* - (X \cap W, F \cap E(G[W]))$. It is indeed so, because  subpath of $P$ with internal vertices in $V(G) \setminus W$ corresponds to an edge in $E_b^*$ and vice versa. Thus, $u$ and $v$ are in the same connected component of $G_{\PP^*}^*  - (X \cap W, F \cap E[(G[W]))$ if and only if $(u, v) \in R_b^*$ and hence $(X \cap W, F \cap E[(G[W]))$ is a solution to $(\I_b^*, \PP^*)$.

So we get that $\sol_{\PP^*}^* = \X_{\PP^*}^* =: (X^*, F^*) \neq \bot$ and $|X^*| \leq k^*$ and $|F^*| \leq l^*$. Let $\X_\PP' = (X', F')$, where $X' = (X \setminus W) \cup X^*$ and $F' = (F \setminus E(G[W])) \cup F^*$. We claim that $\X_\PP'$ is a solution for $(\I_b, \PP)$. As $|X^*| \leq |X \cap W|$ and $|F^*| \leq |F \cap E(G[W])$, we have that $|X'| \leq |X| \leq k'$ and $|F'| \leq |F| \leq l'$. Also, since $X_b^* = X \cap T_b^*$ by definition of  $X_b^*$ and by properties of $\X_{\PP^*}^*$, we have that $X_b^* = X^* \cap T_b^*$. Hence,  $X \cap T_b^* = X^* \cap T_b^*$ and $(X,F)$ is a solution for $(\I_b,\PP)$ which gives $X' \cap T_b = X_b$. 

Now let us look at two vertices $u,v  \in T \cup (T_b \setminus X_b)$. We need to show that $u$ and $v$ lie in the same connected component of $G_\PP - \X_\PP'$ if and only if they lie in the same connected component of $G_\PP - \X_\PP$. This will conclude the proof of the lemma. 

Let $u,v  \in T \cup (T_b \setminus X_b)$ lie in the same connected component of $G_\PP - \X_\PP$. Let $D := T \cup (T_b \setminus X_b) \cup Z_W$. Let $P$ be a path connecting $u$ and $v$ in $G_\PP - \X_\PP$ and let $u=v_0, v_1, v_2, \ldots, v_r=v$ be the sequence of vertices that lie on $P$ in that order such that $v_i \in D$ for all $i \in \{0,1,\ldots, r\}$. Since $X \setminus V^* = X' \setminus V^*$ and $X \cap D = X' \cap D = X_b \cup X_b^*$, we have $v_i \notin X'$ for all $i \in \{0,1,2,\ldots,r\}$. Now, to finish this direction of the proof, we need to show that the vertices $v_i$ and $v_{i+1}$ lie in same connected component of $G_\PP - \X_\PP'$ for all $i \in \{0,1, \ldots, r-1\}$.

Let $P_i$ be the subpath of $P$ between $v_i$ and $v_{i+1}$. As $Z_W \subseteq D$, $P_i$ is either a path in $G_\PP'$ or a path in $G_\PP[W]$. For the first case, we know that $X \setminus V^* = X' \setminus V^*$ and also we have that $F \setminus F^* = F' \setminus F^*$, and hence the path is also present on $G_\PP'-\X_\PP$ and hence in $G_P - \X_\PP'$. In the second case, we have that $(v_i, v_{i+1}) \in \R_b^*$. As $(X' \cap W, F' \cap E(G[W])) = (X^*, F^*)$ is a solution to $(\I_b^*, \PP^*)$, we infer that $v_i$ and $v_{i+1}$ are connected via a path $P_i^*$ in $G_{\PP^*}^* - (X' \cap W, F' \cap E(G[W]))$. Hence, $u$ and $v$ are connected in $G_\PP - 
X_\PP'$ if the path $P^*$ does not use any edge $w_1w_2 \in E_b^*$. By definition of $E_b^*$, $w_1w_2 \in E_b^*$ if and only if $w_1$ and $w_2$ lie in the same connected component of $G_\PP'- (X'', F'')$. But we know that $X'' = X \setminus V^* = X' \setminus V^*$ and $F'' = F \setminus E(G[W]) = F' \setminus E(G[W])$ and hence $w_1$ and $w_2$ are also connected in $G_\PP' - \X_\PP''$ and hence in $G_\PP - \X_\PP'$. This completes one direction of the proof. 

For the other direction, the proof is completely symmetrical and we omit the details. This completes the proof of the lemma.   
\end{proof}

Now we describe the recursive step of the algorithm. 

\begin{step}\label{step:recursive}
Assume we are given a \bmmcu\ instance $\I_b = (G, T, \R, k, l, T_b)$. Invoke first the algorithm of Lemma~\ref{lemma:nodesep} in a search for $(q, k+l)$-good node separation (with $V^\infty = T$). If it returns a good node separation $(Z, V_1 , V_2)$, let $j \in \{1, 2\}$ be such that $|V_j \cap T_b| \leq k+l$ and
denote $Z^* = Z$, $V^* = V_j$. Otherwise, if it returns that no such good node separation exists in
$G$, invoke the algorithm of Lemma~\ref{lemma:flowersep} in a search for $(q, k+l)$-flower separation w.r.t. $T_b$ (with $V^\infty = T$ again). If it returns that no such flower separation exists in $G$, pass the instance $\I_b$
to the next step. Otherwise, if it returns a flower separation $(Z, (V_i)_{i=1}^l)$, denote $Z^* = Z$ and
$V^* = \bigcup_{i=1}^l V_i$.

In the case we have obtained $Z^*$ and $V^*$ (either from Lemma~\ref{lemma:nodesep} or Lemma~\ref{lemma:flowersep}), invoke the algorithm recursively for the \bmmcu\ instance $\I_b^*$ defined as in the statement of Lemma~\ref{lemma:recursive} for separator $Z^*$ and set $V^*$, obtaining an output $(\sol_{\PP^*}^*){\PP^* \in \mathbb{P}(\I_b^*)}$. Compute the set $U (\I_b^*)$. Bypass (in an arbitrary order) all vertices of $V^* \setminus (T \cup U(\I_b^*))$. Recall that $T_b^* \subseteq U(\I_b^*)$, so no border terminal gets bypassed. After all vertices of $V^* \setminus U(\I_b^*)$ are bypassed, perform the following operations on terminals of $V^* \cap T$:
\begin{enumerate}
 \item  As long as there exist two different $u, v \in V^* \cap T$ such that $(u, v) \notin \R$, and $uv \in E(G)$, then delete the edge $uv$ and decrease $l$ by $1$; if $l$ becomes negative by this operation, return $\bot$ for all $\PP \in  \mathbb{P}(\I_b)$. 
 \item  As long as there exist two different $u,v \in V^* \cap T$ such that $(u, v) \in \R$ and  either $uv \in E(G)$ or $|N_G(u)\cap N_G(v)| > k+l$, identify $u$ and $v$.
 \item  If the above two rules are not applicable, then, as long as there exist three pairwise distinct terminals
$u_1, u_2, \ldots, u_t \in T$ of the same equivalence class of $\R$ that have the same neighborhood and $t > l+2$, delete $u_i$ for $i > l+2$ from the graph (and delete all pairs containing $u_i$ from $\R$).
\end{enumerate}
 Let $\I_b'$ be the outcome instance.

Finally, restart this step on the new instance $\I_b'$ and obtain a family of solutions $(\sol_\PP)_{\PP \in \mathbb{P}(\I_b')}$ and return this family as an output to the instance $\I_b$.
\end{step}

We first verify that the application of Lemma~\ref{lemma:recursive} is justified in Step~\ref{step:recursive}. By definitions of good node separations and flower separations and by choice of $V^*$, we have that $|V^* \cap T_b| \leq k+l$ and $G[V^* \cup N_G(V^*)]$ is connected. Also, the recursive calls are applied to strictly smaller graphs because in good node separation, $V_2$ is deleted in the recursive call while in the other case, by definition of flower separation, we have that $Z \cup \bigcup_{i=1}^l V_i$ is a proper subset of $V(G)$.

After the bypassing operations, we have that $V^*$ contains at most $q$ vertices that are not terminals (at most $k+l$ border terminals and at most $q - (k+l)$ vertices which are neither terminals nor border terminals). Let us now bound the number of terminal vertices once Step~\ref{step:recursive} is applied. Note that, after Step~\ref{step:recursive} is applied, for any $v \in T \cap V^*$, we have $N_G(v) \subseteq (V^* \setminus T) \cup Z$ and $|(V^* \setminus T) \cup Z| \leq (q + k+l)$. Due to the first and second rule in Step~\ref{step:recursive}, for any set $A \subseteq (V^* \setminus T) \cup Z$ of size $k + l+ 1$, at most one terminal of $T \cap V^*$ is adjacent to all vertices of $A$. Due to the third rule in Step~\ref{step:recursive}, for any set $B \subseteq (V^* \setminus T) \cup Z$ of size at most $k+l$ and for each equivalence class of $\R$, there are at most $l+2$ terminals of this equivalence class with neighborhood exactly $B$. Let $q' := |T \cup V^*|$, then we have the following.

$$ q' \leq (q+k+l)^{k+l+1} + (l+2)(k+l)(k+l+1) \sum_{i=1}^{k+l}(q+k+l)^i = 2^{\Oh((k+l)^2 \log (k+l))} $$

\begin{lemma}
 Assume that we are given a \bmmcu\ instance $\I_b = (G, T, \R, k, l, T_b)$ on which Step~\ref{step:recursive} is applied, and let $\I_b'$ be an instance after Step~\ref{step:recursive} is applied. Then any correct output to the instance $\I_b'$ is a correct output to the instance $\I_b$ as well. Moreover, if Step~\ref{step:recursive} outputs $\bot$ for all $\PP \in  \mathbb{P}(\I_b')$, then this is a correct output to $\I_b$.
\end{lemma}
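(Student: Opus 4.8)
The plan is to prove the lemma by showing two separate claims: (i) a correct output to $\I_b'$ transfers back to a correct output to $\I_b$, and (ii) if all entries of the output for $\I_b'$ are $\bot$, then the same is correct for $\I_b$. Both claims will follow by composing the results already established in the excerpt, so the main work is bookkeeping: tracking how the operations performed in Step~\ref{step:recursive} change the set of (minimal) solutions, and checking that they do so compatibly with the bordered-problem semantics.

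First I would observe that Step~\ref{step:recursive} modifies $\I_b$ by a sequence of operations of three kinds. The bypassing of the vertices in $V^* \setminus (T \cup U(\I_b^*))$ is governed by Lemma~\ref{lemma:bypass} together with Lemma~\ref{lemma:recursive}: the point of Lemma~\ref{lemma:recursive} is precisely that there is a correct output to $\I_b$ whose nonempty minimal solutions $\X_\PP$ satisfy $V(\X_\PP)\cap V^* \subseteq U(\I_b^*)$, so such an $\X_\PP$ does not affect any bypassed vertex; hence by the second item of Lemma~\ref{lemma:bypass} it survives bypassing, and by the first item of Lemma~\ref{lemma:bypass} anything we find afterward pulls back. (One should be a little careful here: Lemma~\ref{lemma:bypass} is stated for plain \mmcu\ instances, so I would note that bypassing a non-border, non-terminal vertex commutes with adding the border edges $E_b$ and with the connectivity conditions defining a solution to $(\I_b,\PP)$, so the same argument applies verbatim to each $\PP$.) The three terminal operations that follow are governed, respectively, by Lemma~\ref{lemma:certainedge} (deleting the edge $uv$ and decrementing $l$ preserves minimal solutions, since every solution must contain that edge; if $l$ goes negative there is no solution, justifying the $\bot$ output), by Lemma~\ref{lemma:identify} (identifying $u,v$ preserves the set of minimal solutions), and by Lemma~\ref{lemma:deleteterminals} (deleting the redundant terminals preserves the set of minimal solutions). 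Again I would remark that each of these lemmas is about \mmcu\ instances but extends to each $\PP$-restricted bordered instance because the border data $(X_b,E_b,\R_b,k',l')$ is untouched by the operation and the operation only affects terminals of $V^*\cap T$, which are disjoint from $T_b$.

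Assembling these, for claim (i): take any correct output $(\sol_\PP)_{\PP\in\mathbb P(\I_b')}$ to $\I_b'$. For each $\PP$, if $\sol_\PP=\X_\PP\neq\bot$, then reversing the operations (un-identifying terminals, re-adding deleted terminals with the appropriate relation pairs, re-adding the deleted edges, un-bypassing vertices) turns $\X_\PP$ into a minimal solution to $(\I_b,\PP)$ by the lemmas above; and if $\sol_\PP=\bot$, then $(\I_b',\PP)$ has no solution, so by the same equivalences $(\I_b,\PP)$ has no solution either — here is where I need that the operations are \emph{equivalences} on solutions (up to the recorded changes in $l$), and where Lemma~\ref{lemma:recursive} is needed: the ``forward'' direction for bypassing only holds for the particular output whose solutions avoid the bypassed vertices, so I must argue that if $(\I_b,\PP)$ had a solution then it would have a \emph{minimal} one avoiding the bypassed vertices, which is exactly Lemma~\ref{lemma:recursive} applied to $(Z^*,V^*)$. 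For claim (ii): if Step~\ref{step:recursive} outputs $\bot$ for all $\PP\in\mathbb P(\I_b')$ — which happens either because $l$ went negative during the first terminal rule, or because the recursively obtained output is all $\bot$ — then in the first case no solution can exist since a forced edge deletion exceeds the budget, and in the second case claim (i) already tells us $\bot$ pulls back.

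I expect the main obstacle to be the careful verification that the recursion is well-founded and that the lemmas invoked for \mmcu\ really do lift to the bordered setting uniformly over all $\PP\in\mathbb P(\I_b)$ — in particular, checking that the bypassing step (which is the only non-trivial operation, relying on Lemma~\ref{lemma:recursive} rather than a clean ``set of minimal solutions is preserved'' statement) interacts correctly with the fact that $\mathbb P(\I_b)=\mathbb P(\I_b')$ is unchanged by bypassing and by the terminal rules, except for the shrinking of $l$ in rule~1, which I would handle by noting that the family $\mathbb P$ for the new instance naturally embeds into that of the old one. Everything else is a routine unwinding of definitions of ``solution to $(\I_b,\PP)$'' through each operation.
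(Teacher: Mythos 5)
Your proof is correct and takes essentially the same route as the paper: the paper's own argument is a brief citation of Lemma~\ref{lemma:recursive} together with Lemma~\ref{lemma:bypass} for the bypassing step, and Lemmas~\ref{lemma:certainedge}, \ref{lemma:identify}, \ref{lemma:deleteterminals} for the three terminal rules, exactly as you do. You are in fact somewhat more careful than the paper on two points it silently glosses over --- that Lemma~\ref{lemma:recursive} gives a single output whose minimal solutions avoid \emph{all} bypassed vertices simultaneously (not one solution per vertex), and that the \mmcu-level lemmas must be seen to lift to each $(\I_b,\PP)$ --- both of which you handle in the right way.
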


\begin{proof}
 We first note that by Lemma~\ref{lemma:recursive}, for all $\PP \in \mathbb{P}(\I_b)$, for all the vertices $v \notin U(\I_b^*)$, there exists a minimal solution to $(\I_b, \PP)$ that does not affect $v$, hence by Lemma~\ref{lemma:bypass}, the bypassing operation is justified. The second and third rules are justified by Lemma~\ref{lemma:identify} and Lemma~\ref{lemma:deleteterminals} respectively. The first rule is justified by Lemma~\ref{lemma:certainedge}, and if application of this rule makes $l$ negative then  for any $\PP \in \mathbb{P}(\I_b)$, there is no solution to $(\I_b, \PP)$.
\end{proof}

Now we do a running time analysis for Step~\ref{step:recursive}. The applications of Lemma~\ref{lemma:nodesep} and Lemma~\ref{lemma:flowersep} take time $\Oh (2^{\Oh(\min(q,k+l) \log(q+k+l))} n^3 \log n) = \Oh (2^{\Oh((k+l)^2 \log(k+l))} n^3 \log n)$. Let $n' = |V^*|$; the recursive step is applied to a graph with at most $n' + k+l$ vertices and, after bypassing, there are at most $\min(n - 1, n - n' + q + q')$ vertices left. Moreover, each bypassing operation takes $\Oh(n^2)$ time, the computation of $U(\I_b^*)$ takes $\Oh(2^{\Oh((k+l) \log (k+l))} n)$ time. Each application of Lemma~\ref{lemma:certainedge} takes $\Oh(n^2)$ time and it can be applied at most $l$ times. So all applications of Lemma~\ref{lemma:certainedge} take $\Oh(ln^2)$ time. Application of Lemma~\ref{lemma:identify} takes $\Oh((k+l)n^2)$ time per operation, which can be implemented by having a counter for each pair of terminals and increasing those counters accordingly by considering every pair of terminals of $N_G(x)$, for each $x \in V(
G) 
\setminus T$. Since when a counter reaches value $k+l + 1$ for vertices $u, v$, we know that $|N_G(u)\cap N_G(v)| > k+l$, the total time consumed is bounded by $\Oh((k+l)n^2)$. Application of Lemma~\ref{lemma:deleteterminals} takes $\Oh(n^2 \log n)$ time per operation, since we can sort terminals from one equivalence class according to their sets of neighbours. Since both Lemma~\ref{lemma:identify} and Lemma~\ref{lemma:deleteterminals} decrease the number of terminals by at least $1$, when applied, they can be applied at most $n$ times.  Hence, all applications of Lemma~\ref{lemma:identify} and Lemma~\ref{lemma:deleteterminals} takes total $\Oh(n^3 (k+l + \log n))$ time. We also note that values of $k$ and $l$ do not change during the recursive calls. So, we get the following recurrence relation for the running time of Step~\ref{step:recursive} as a function of vertices of the graph $G$. 

$$T(n) \leq \max_{q+1 \leq n' \leq n-q-1} \big( \Oh (2^{\Oh((k+l)^2 \log(k+l))} n^3 \log n) + T(n'+k+l) + T(\min(n-1,n-n'+q+q')) \big) $$

Solving the recurrence gives $T(n) = \Oh (2^{\Oh((k+l)^2 \log(k+l))} n^4 \log n)$ in the worst case, which is the desired upper bound for the running time of the algorithm.

%
%
%
%
%

\subsection{High Connectivity phase}

In this section we describe the high connectivity phase for the algorithm. Assume we have a \bmmcu\ instance $\I_b = (G, T, \R, k', l', \V,  T_b)$ where Step~\ref{step:recursive} is not applicable. Let us fix $\PP = (X_b, E_b, \R_b, k,l) \in \mathbb{P}(\I_b)$. We iterate through all possible values of $\PP$ and try to find a minimal solution to $(\I_b, \PP)$. Since $|\mathbb{P}(\I_b)| = 2^{\Oh((k+l) \log (k+l))}$ it results in a factor of $2^{\Oh((k+l) \log (k+l))}$ in the running time.  For a graph $G$, by $\CS(G)$ we denote the set $V(G) \cup E(G)$. Similarly, for a tuple $\X = (X, F)$, by $\CS(\X)$ we denote the set $X \cup F$. Since we have assumed that Step~\ref{step:recursive} is not applicable, for any  $\X = (X,F)$ where $X \subseteq V(G) \setminus T$, $F \subseteq E(G)$, $|X| \leq k$ and $|F| \leq l$, Lemma~\ref{lemma:highconnected} implies that the graph $G - \X$ contains at most $t := (2q + 2)(2(k+l) - 1) + 2(k+l) + 1$ connected components containing a non-terminal, out of which at most 
one can 
contain 
more than $q$ vertices outside $T$. Let us denote its vertex set by $\bigs(\X)$ (observe that this can possibly be the empty set, in case such a component does not exist). We once again need to use lemmas~\ref{lemma:certainedge}-\ref{lemma:deleteterminals} to bound number of terminals. 

\begin{step}\label{step:terminalreduction}
 Apply Lemma~\ref{lemma:certainedge}, Lemma~\ref{lemma:identify} and Lemma~\ref{lemma:deleteterminals} exhaustively on the set $T$ of terminals in the graph (as done in rules 1-3 of Step~\ref{step:recursive}, but doing it for all of $T$ instead of just $T \cap V^*$). Then remove all but $l+1$ copies of edges having multiplicity more than $l+1$.
\end{step}

The running time analysis of applying lemmas~\ref{lemma:certainedge}-\ref{lemma:deleteterminals} in this step is exactly the same as the one done in Step~\ref{step:recursive} and takes $\Oh(n^3(k+l+\log n))$ time. Deletion of copies of parallel edges is justified by Lemma~\ref{lemma:deleteparallel} and takes time at most $\Oh(mn)$. Hence, the total time taken by this step is $\Oh(n^3(k+l+\log n))$. Now we define the notion of interrogating a solution, which will help us in highlighting the solution. 

\begin{definition}
Let  $\Z = (Z,F')$ where $Z \subseteq V(G) \setminus T$, $F' \subseteq E(G)$, $|Z| \leq k$ and $|F'| \leq l$ and let $S \subseteq \CS(G) \setminus T$. We say that $S$ interrogates $\Z$ if the following holds:
\begin{itemize}
 \item $S \cap \CS(\Z) = \emptyset$;
 \item for any connected component $C$ of $G - \Z$ with at most $q$ vertices outside $T$, all vertices and edges of $C$ belong to $S \cup T$.
\end{itemize}
\end{definition}

\begin{lemma}\label{lemma:setfamilyinterrogates}
Let $q''= (qt+k+l)^{k+l+1} + (l+2)(k+l)(k+l+1) \sum_{i=1}^{k+l} (qt+k+l)^i$.  Let $\F$ be a family obtained by the algorithm of Lemma~\ref{lemma:highconnected} for universe $U = \CS(G) \setminus T$ and constants $a = q''+ qt +{q''+qt \choose 2}$ and $b = k+l$, Then, for any $\Z = (Z,F')$ where $Z \subseteq V(G) \setminus T$, $F' \subseteq E(G)$, $|Z| \leq k$ and $|F'| \leq l$, there exists a set $S \in \F$ that interrogates $\Z$.
\end{lemma}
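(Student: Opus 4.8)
The plan is to reduce the statement to a single application of the set-family construction (Lemma~\ref{lemma:setfamily}; the statement's reference to Lemma~\ref{lemma:highconnected} should read Lemma~\ref{lemma:setfamily}), with the ``must be contained'' set $A$ taken to be the non-terminal vertices together with all edges of the small components of $G-\Z$, and the ``must be avoided'' set $B$ taken to be $\CS(\Z)=Z\cup F'$. First I would record the structural consequences of Step~\ref{step:terminalreduction}: terminals are pairwise non-adjacent, so $N_G(v)\subseteq V(G)\setminus T$ for every $v\in T$; for every set of $k+l+1$ vertices at most one terminal is adjacent to all of them; and for every vertex set of size at most $k+l$ and every equivalence class of $\R$ there are at most $l+2$ terminals of that class whose neighbourhood is exactly that set. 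I would also invoke the fact established just before Step~\ref{step:terminalreduction} (a consequence of Lemma~\ref{lemma:highconnected} together with the non-applicability of Step~\ref{step:recursive}, taking $V^\infty=T$): for any $\Z=(Z,F')$ with $|Z|\le k$ and $|F'|\le l$, the graph $G-\Z$ has at most $t$ connected components containing a non-terminal, at most one of which has more than $q$ vertices outside $T$.

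Fix $\Z=(Z,F')$ as in the statement and let $\mathcal{C}$ be the set of components of $G-\Z$ with at most $q$ vertices outside $T$ (the ``small'' ones; a component meeting no non-terminal is a single isolated terminal and is irrelevant). Put $A:=\bigcup_{C\in\mathcal{C}}\big((V(C)\setminus T)\cup E(C)\big)$ and $B:=Z\cup F'$; both are subsets of $U=\CS(G)\setminus T$. I would check that $A\cap B=\emptyset$ (a vertex of $A$ lies in a component of $G-\Z$, hence not in $Z$; an edge of $A$ survives in $G-\Z$, hence is not in $F'$; vertices and edges do not mix), and that $|B|=|Z|+|F'|\le k+l=b$.

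The main work is the bound $|A|\le a$. The small components contain at most $qt$ non-terminal vertices in total. For the terminals they contain I would repeat the counting that produced $q'$ in Step~\ref{step:recursive}: every terminal $v$ inside a small component has $N_G(v)$ contained in the set consisting of the (at most $qt$) non-terminal vertices of small components together with $Z$ (and the at most $2l$ endpoints of $F'$), a set of size $O(qt+k+l)$; splitting terminals according to whether they have more or fewer than $k+l$ neighbours and applying the two structural facts above bounds the number of terminals inside small components by $q''$. Every edge of a small component has a non-terminal endpoint and there are no terminal--terminal edges, so all such edges lie among the at most $q''+qt$ vertices that belong to small components, hence there are at most $\binom{q''+qt}{2}$ of them. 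Adding up, $|A|\le qt+\binom{q''+qt}{2}\le q''+qt+\binom{q''+qt}{2}=a$. Now Lemma~\ref{lemma:setfamily} yields $S\in\F$ with $A\subseteq S$ and $B\cap S=\emptyset$; then $S\cap\CS(\Z)=S\cap B=\emptyset$, and for every small component $C$ we have $V(C)\cup E(C)\subseteq A\cup T\subseteq S\cup T$, so $S$ interrogates $\Z$.

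The only real obstacle I anticipate is the bound on the number of terminals inside the small components: one has to confine their neighbourhoods to a set of size $O(qt+k+l)$ (here the $F'$-endpoints need a word of care) and then recycle the $q'$-style double count from Step~\ref{step:recursive}, keeping the estimates loose enough that the final count still fits inside $a$.
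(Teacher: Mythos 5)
Your decomposition of the problem is exactly the one the paper uses: take $A$ to be (roughly) the non-terminal vertices plus edges of the small components of $G-\Z$, take $B=\CS(\Z)$, bound $|A\setminus T|\le qt$ from Lemma~\ref{lemma:highconnected}, bound $|A\cap T|\le q''$ by recycling the $q'$-count from Step~\ref{step:recursive}, and invoke Lemma~\ref{lemma:setfamily}. So the approach is the same, and you were also right that the statement's reference to Lemma~\ref{lemma:highconnected} is a typo for Lemma~\ref{lemma:setfamily}.

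Two concrete issues in your write-up, both flagged by your own remark that ``the $F'$-endpoints need a word of care.'' First, bounding the terminal neighbourhoods by $qt+|Z|+2l$ is too loose to land on $q''$ as defined: $q''$ is built from the base $(qt+k+l)$, not $(qt+k+2l)$. The paper's trick is to set $C=\{u : uv\in F',\, v\in A\cap T\}$ and observe that, since after Step~\ref{step:terminalreduction} no two terminals are adjacent, each edge of $F'$ incident to a terminal of $A$ contributes exactly one non-terminal $u$, so $|C|\le|F'|\le l$; the relevant neighbourhood set is then $(A\setminus T)\cup Z\cup C$ of size at most $qt+k+l$. Second, and more substantively, your edge count $\binom{q''+qt}{2}$ forgets that $G$ is a multigraph: Step~\ref{step:terminalreduction} only removes copies beyond $l+1$, so any pair can still carry up to $l+1$ parallel edges, and the paper's own proof therefore bounds the number of edges by $(l+1)\binom{q''+qt}{2}$. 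Curiously, this makes the paper's stated constant $a=q''+qt+\binom{q''+qt}{2}$ inconsistent with its own proof (which establishes $|A\cup B|\le q''+qt+(l+1)\binom{q''+qt}{2}$); your count happens to match the printed $a$, but the multiplicity factor does need to be there, so $a$ should read $q''+qt+(l+1)\binom{q''+qt}{2}$ and your argument should be adjusted accordingly.
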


\begin{proof}
 Let $\Z = (Z,F')$ where $Z \subseteq V(G) \setminus T$, $F' \subseteq E(G)$, $|Z| \leq k$ and $|F'| \leq l$. Let $A$ be the union of vertex sets of all connected components of $G - \Z$ that have at most $q$ vertices outside $T$ and let $B$ be the set of edges of these connected components. By Lemma~\ref{lemma:highconnected}, $|A \setminus T | \leq qt$. Also, since we have applied Step~\ref{step:terminalreduction} exhaustively, we have that no two terminals are adjacent in the graph. Let $C = \{u \ | uv \in E(G) \cap F', v \in A\cap T\}$. Clearly, $C \cap T = \emptyset$, $|C| \leq l$ and for any $v \in A \cap T$, $N(v) \subseteq (A  \setminus T) \cup Z \cup C$ which gives $|N(A \cap T)| \leq |(A  \setminus T) \cup Z \cup C| \leq qt + k+l$. Now doing the same analysis as in Step~\ref{step:recursive} gives us $|A \cap T| \leq q''$. Since all the edges in the graph have multiplicity at most $l+1)$, we have that $|A \cup B| \leq q''+ qt+ (l+1){q''+qt \choose 2}$ and $|\CS(\Z)| \leq k+l$. So, by Lemma~\ref{lemma:
setfamily}, there is a set $S \in \F$ that is 
disjoint with $\CS(\Z)$ 
and contains $(A \setminus T) \cup B$. By construction of $A$ and $B$, $S$ interrogates $\Z$. This completes the proof of the lemma.
\end{proof}

\begin{step}\label{step:setfamily}
Compute the family $\F$ from Lemma~\ref{lemma:setfamilyinterrogates} and branch into $|\F|$ subcases, indexed by sets $S \in \F$. In a branch $S$ we seek for a minimal solution $\X_\PP$ to $(\I_b , \PP)$, which is interrogated by $S$.
\end{step}

Note that since we have $q'' = 2^{\Oh((k+l)^2 \log (k+l))}$ and $q, t = 2^{\Oh((k+l) \log (k+l))}$, the family $\F$ of Lemma~\ref{lemma:setfamily} is of size $\Oh (2^{\Oh((k+l)^3 \log(k+l))} \log n)$ and can be computed in $\Oh (2^{\Oh((k+l)^3 \log(k+l))} n \log n)$ time. The correctness of Step~\ref{step:setfamily} is obvious from Lemma~\ref{lemma:setfamilyinterrogates}. As discussed, it can be applied in $\Oh (2^{\Oh((k+l)^3 \log(k+l))}n \log n)$ time and gives rise to $\Oh (2^{\Oh((k+l)^3 \log(k+l))} \log n)$ subcases. 

\begin{lemma}\label{lemma:highconnectivity}
Let $\X_\PP = (X,F)$ be a solution to $(\I_b, \PP)$ interrogated by $S$. Then there exists a set $T^{\bigs} \subseteq T \cup (T_b \setminus X_b)$ that is empty or contains all vertices of exactly
one equivalence class of $\R_b$, such that $X \subseteq (X_b \cup N_G (S(T^{\bigs}))$ and $F = A_{G,X}(S(T^\bigs))$, where $S(T^{\bigs})$ is the union of vertex sets of all connected components of $G(S \cup T \cup (T_b \setminus X_b))$ that contain a vertex of $(T \cup (T_b \setminus X_b )) \setminus T^{\bigs}$ and $A_{G,X}(S(T^\bigs))$ is set of edges in $G$ which have at least one end point in $S(T^\bigs)$ but do not belong to any of the connected components of $G[S(T^\bigs)]$ and are not incident on $X$.

\end{lemma}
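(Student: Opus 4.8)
The strategy is to use that Step~\ref{step:recursive} is inapplicable to obtain a coarse picture of $G-\X_\PP$ as one possibly-large component plus many small ones, and then to read off $X$ and $F$ from the small components using the minimality of $\X_\PP$ (we are seeking a minimal solution, as in Step~\ref{step:setfamily}) together with the fact that $S$ interrogates $\X_\PP$.

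First I would fix the decomposition. Since Step~\ref{step:recursive} cannot be applied and $|X|\le k$, $|F|\le l$, the bound recorded just before Step~\ref{step:terminalreduction} (an application of Lemma~\ref{lemma:highconnected}) gives that $G-\X_\PP$ has at most one connected component with more than $q$ non-terminal vertices; write $B:=\bigs(\X_\PP)$ for its vertex set (possibly empty) and call every other component \emph{small}. Because $S$ interrogates $\X_\PP$, every small component has all of its non-terminal vertices and all of its edges in $S$ (and its terminals in $T$). Then I would define $T^\bigs$: if $B$ meets $T\cup(T_b\setminus X_b)$, then, since $B$ is the vertex set of a single component of $G-\X_\PP$ and passing from $G$ to $G_\PP$ only merges components, all vertices of $B$ in $T\cup(T_b\setminus X_b)$ lie in one component of $G_\PP-\X_\PP$, hence in a single equivalence class of $\R_b$; let $T^\bigs$ be that class. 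Otherwise set $T^\bigs:=\emptyset$. This already has the required shape.

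The core step is to pin down $S(T^\bigs)$ as \emph{exactly} the union of the vertex sets of those small components $C$ of $G-\X_\PP$ that contain a vertex of $(T\cup(T_b\setminus X_b))\setminus T^\bigs$. The inclusion ``$\supseteq$'' is immediate: such a $C$ is a connected subgraph of $G(S\cup T\cup(T_b\setminus X_b))$ (its non-terminal vertices and all its edges are in $S$, its terminals in $T$), hence sits inside one connected component of that graph, and that component contains the witnessing terminal outside $T^\bigs$, so $V(C)\subseteq S(T^\bigs)$. For ``$\subseteq$'' one must show that nothing else --- in particular no vertex of $B$ --- leaks into $S(T^\bigs)$; the main tool is that every edge of $G(S\cup T\cup(T_b\setminus X_b))$ is by definition an edge of $S$, hence (as $S\cap\CS(\X_\PP)=\emptyset$) not in $F$, hence not deleted in $G-\X_\PP$, so any such edge leaving a component of $G-\X_\PP$ must have its other endpoint in $X$; combined with $S\cap X=\emptyset$, a short case analysis relates the components of $G(S\cup T\cup(T_b\setminus X_b))$ to those of $G-\X_\PP$ and yields the claim. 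I expect this bookkeeping --- keeping $S(T^\bigs)$ tight, in particular ruling out that $B$ or a chain through $X$ enters it --- to be the main obstacle, since the exact equality $F=A_{G,X}(S(T^\bigs))$ rests on it.

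Granting the identification of $S(T^\bigs)$, the two displayed claims follow from minimality. For $X\subseteq X_b\cup N_G(S(T^\bigs))$: given $v\in X\setminus X_b$, minimality makes $(X\setminus\{v\},F)$ fail to be a solution, and since deleting fewer vertices only adds connectivity, the failure is a new connection between some $u_1,u_2\in T\cup(T_b\setminus X_b)$ with $(u_1,u_2)\notin\R_b$; by Observation~\ref{obs}, $v$ meets no edge of $F$, so reinserting $v$ merges all $G-\X_\PP$-components containing a neighbour of $v$, and two of them, $D_1\ni u_1$ and $D_2\ni u_2$, carry terminals of distinct $\R_b$-classes; at most one of $u_1,u_2$ lies in $T^\bigs$, say $u_2\notin T^\bigs$, so $D_2\neq B$ (the relevant terminals of $B$ all lie in $T^\bigs$), whence $D_2$ is a relevant small component, $V(D_2)\subseteq S(T^\bigs)$, and $v$ has a neighbour in $D_2$. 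The equality $F=A_{G,X}(S(T^\bigs))$ is handled by the same device in both directions: for $e=w_1w_2\in F$, Observation~\ref{obs} gives that $e$ is not incident to $X$, and minimality applied to $(X,F\setminus\{e\})$ places $w_1,w_2$ in distinct $G-\X_\PP$-components with terminals in distinct $\R_b$-classes, one of which is $\neq T^\bigs$; using the identification of $S(T^\bigs)$ this shows $e$ has exactly one endpoint in $S(T^\bigs)$ and is not incident to $X$, i.e. $e\in A_{G,X}(S(T^\bigs))$; conversely, an edge $e\in A_{G,X}(S(T^\bigs))$ not in $F$ would keep its two endpoints in one $G-\X_\PP$-component, and the identification of $S(T^\bigs)$ forces this to contradict that exactly one endpoint of $e$ lies in $S(T^\bigs)$. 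Hence $F=A_{G,X}(S(T^\bigs))$, which completes the proof.
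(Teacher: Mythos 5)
Your overall strategy is genuinely different from the paper's: you run ``drop one element'' minimality arguments directly on $X$ and on $F$, whereas the paper constructs an explicit candidate $\X'=(X',F')$ with $X'=X_b\cup(X\cap N_G(S(T^\bigs)))$ and $F'=A_{G,X'}(S(T^\bigs))\cap F$, proves $\X'$ is again a solution to $(\I_b,\PP)$ (via a shortest-path / minimum-distance contradiction that carefully handles the $E_b$ edges), and then lets minimality force $\X'=\X$. Both routes hinge on the same structural identification of $S(T^\bigs)$ with the union of the small components of $G-\X_\PP$ that carry a terminal outside $T^\bigs$, and both implicitly need $\X_\PP$ to be minimal even though the lemma as stated only says ``solution''.

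However, your minimality arguments have a real gap in the treatment of $G$- versus $G_\PP$-connectivity. When you reinsert $v$ (respectively an edge $e\in F$) and observe that $(X\setminus\{v\},F)$ (respectively $(X,F\setminus\{e\})$) fails, the failure is a forbidden connection in $G_\PP-(\cdot)$, i.e.\ it may pass through $E_b$ edges between border terminals. You then assert that two $G-\X_\PP$-components $D_1\ni u_1$ and $D_2\ni u_2$, each adjacent to $v$, carry terminals of distinct $\R_b$-classes. But $u_1,u_2$ need not lie in the $G-\X_\PP$-components that actually touch $v$: the path from $v$'s neighbour $w_2$ to $u_2$ in $G_\PP-\X_\PP$ can first leave $C_{w_2}$ (the $G-\X_\PP$-component of $w_2$) through an $E_b$ edge, and only later reach $u_2$ in a different $G$-component. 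So $C_{w_2}$ need not contain any of $u_1,u_2$. The argument can be repaired --- an $E_b$ edge has both endpoints in $T_b\setminus X_b$, so $C_{w_2}$ must contain a border terminal, which is itself a terminal outside $T^\bigs$ since $C_{w_2}$ lies in a $G_\PP-\X_\PP$-component other than $\bigs_\PP(\X_\PP)$ --- but this extra step is exactly what is missing and is precisely where the paper's shortest-path device on $G_\PP$ does the work.

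A second, related gap concerns the claimed equality $F=A_{G,X}(S(T^\bigs))$. You state that every $e\in F$ ``has exactly one endpoint in $S(T^\bigs)$'', but your argument only shows that one endpoint lies in $S(T^\bigs)$; nothing you say rules out both endpoints lying in $S(T^\bigs)$ (e.g.\ $e$ joins two different small components, each of which carries a terminal outside $T^\bigs$ in a different $\R_b$-class). In that case $e$ is an edge of $G[S(T^\bigs)]$, hence belongs to a connected component of $G[S(T^\bigs)]$, and so by the definition $e\notin A_{G,X}(S(T^\bigs))$, contradicting the equality. If you want to argue $F=A_{G,X}(S(T^\bigs))$ by a drop-one-element argument, you must address this configuration explicitly; the paper sidesteps it by defining $F'$ as an intersection with $F$ and never separately proving $A_{G,X}(S(T^\bigs))\subseteq F$ in the proof text.
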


\begin{proof}
Let $\X_\PP = (X,F)$. Consider the graph $G_\PP-\X_\PP$ and let $\bigs_\PP(\X_\PP)$ be the vertex set of the connected component of $G_\PP -\X_\PP$ that contains $\bigs(\X_\PP)$ (recall that $G_\PP$ is the graph $G$ with additional edges $E_b$ ; thus $\bigs_\PP(\X_\PP)$ may be significantly larger than $\bigs(\X_\PP)$). If $\bigs(\X_\PP) = \emptyset$, then we take $\bigs_\PP(\X_\PP)$ to be the empty set as well. As $\X_\PP$ is a solution to $(I_b, \PP)$, we have $X \cap T_b = X_b$. Define $T^\bigs = (T \cup (T_b \setminus X_b)) \cap \bigs_\PP (\X_\PP )$; as $\X_\PP$ is a solution to $(I_b , \PP)$, $T^\bigs$ is empty or contains vertices of exactly one equivalence class of $R_b$.

Now let $C$ be the vertex set of a connected component of $G-\X_\PP$ that contains a vertex $v \in (T \cup (T_b \setminus X_b)) \setminus T^\bigs$. Clearly, $v \notin \bigs_\PP(\X_\PP)$. As $S$ interrogates $\X_\PP$, $\bigs_\PP (\X_\PP)$ contains $\bigs(\X_\PP)$ and $X \cap (T \cup T_b) = X_b \subseteq T_b$, we infer that $C$ is the vertex set of a connected component of $G(S \cup T \cup (T_b \setminus X_b ))$ as well. As $v \in C$, $C$ is a connected component of $G[S(T^\bigs)]$. Let $X' = X_b \cup (X \cap N_G(S(T^\bigs)))$ and let $F'= A_{G,X'}(S(T^\bigs)) \cap F$. We claim that $\X'=(X',F')$ is a solution to $(\I_b, \PP)$. 

As $X_b \subseteq X$, we have that $X' \subseteq X$ and hence $|X'| \leq k$. Also, since $F' \subseteq F$, we have that $|F'| \leq l$. Moreover, as $X' \subseteq X$, $F' \subseteq F$ and $\X_\PP = (X,F)$ is a solution to $(\I_b , \PP)$, if $(u, v) \in \R_b$ then $u$ and $v$ are in the same connected component of $G_\PP - \X'_\PP$. We now show that for any $(u, v) \notin \R_b$ the vertices $u$ and $v$ are in different connected components of $G_\PP - \X'_\PP$. For the sake of contradiction, let $u,v \in T \cup (T_b \setminus X_b)$ such that $(u,v) \notin \R_b$, $u$ and $v$ lie in the same connected component of $G_\PP - \X'_\PP$ and distance between $u$ and $v$ is minimum possible. Let $P$ be the shortest path between $u$ and $v$ in $G_\PP - \X'_\PP$.

As $\X_\PP = (X,F)$ is a solution to $(\I_b , \PP)$, $u$ and $v$ lie in different connected components of $G_\PP - \X_\PP$. Without loss of generality, let $v \notin T^\bigs$ and let $C$ be the connected component of $G - \X_\PP$ that contains $v$. Since $(u,v) \notin \R_b$ and $G-\X_\PP$ is a subgraph of $G_\PP - \X_\PP$, we have that $u \notin C$. Since for every edge $xy$ in $G$ such that $x \in C$ and $y \notin C$, we have that either $y \in X'$ or $xy \in F'$, the path $P$ contains an edge $v_1u_1 \in E_b$ such that $v_1 \in C$ and $u_1 \notin C$. So in that case we have that $v_1, u_1 \in T_b$ and hence $(v_1,u_1) \in \R_b$. Also, since $v_1 \in C$, $(v,v_1) \in \R_b$ and since $\R_b$ is an equivalence relation, we have that $(v,u_1) \in \R_b$. But by our assumption, $(u,v) \notin \R_b$, which gives that $(u_1, u) \notin \R_b$. Hence $u_1, u \in T \cup (T_b \setminus X_b)$ such that $(u_1, u) \notin \R_b$  and they are connected by a proper subpath of $P$ in $G_\PP - \X'_\PP$, which contradicts our 
choice of $u$, $v$ and $P$. This completes the proof of the lemma.  
\end{proof}

Now we are ready to give the final step of the algorithm. The correctness of the step follows from Lemma~\ref{lemma:highconnectivity} and the fact that if $S$ interrogates a solution $\X$ to $(\I_b, \PP)$, then $|N_G(S(T^\bigs))| \leq k+l$. 

\begin{step}\label{step:finalstep}
For each branch, where $S$ is the corresponding guess, we do the following. For each set $T^\bigs$ that is empty or
contains all vertices of one equivalence class of $\R_b$, if $|N_G(S(T^\bigs))| \leq k+l$, then for each  $X  \subseteq X_b \cup N_G(S(T^\bigs))$ such that $|X| \leq k$, and  $F = A_{G,X}(S(T^\bigs))$, check whether $(X,F)$ is a solution to $(\I_b,\PP)$ interrogated by $S$. For each $\PP$, output a minimal solution to $(\I_b,\PP)$ that is interrogated by $S$. Output $\bot$ if no solution is found for any choice of $S$, $T^\bigs$ and $X$. 
\end{step}

Note that $\R$ has at most  $(k+l)(k+l+1)$ equivalence classes. As $|T_b| \leq 2(k+l)$, we have $\R_b \leq (k+l)(k+l+3)$, and hence there are at most $(k+l)(k+l+3)+1$ choices of the set $T^\bigs$. For each $T^\bigs$, computing $N_G(S(T^\bigs))$ and checking whether $|N_G(S(T^\bigs))| \leq k+l$ takes $\Oh(n^2)$ time. Since $X_b \leq k$, there are at most $(k+1)(2k+l)^k$ choices for $X$, and then computing $F = A_{G,X}(S(T^\bigs))$ and checking whether $(X,F)$ is a solution to $(\I_b,\PP)$ interrogated by $S$ take $\Oh(n^2)$ time each. Finally, checking whether the solution is minimal or not and computing a minimal solution takes additional $\Oh((k+l)n^2)$ time. Therefore Step~\ref{step:finalstep} takes $\Oh (2^{\Oh((k+l)^3 \log(k+l))} n^2 \log n)$ time for all subcases.

This finishes the description of fixed-parameter algorithm for \mmcu\ and we get the following theorem.

\begin{theorem}
 \mmcu\ can be solved in $\Oh (2^{\Oh((k+l)^3 \log(k+l))} n^4 \log n)$ time.
\end{theorem}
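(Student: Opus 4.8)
The plan is to assemble the algorithm from the steps already developed in the paper and argue that, together, they yield a fixed-parameter algorithm with the claimed running time. First I would invoke Steps~\ref{step:deletevertex}--\ref{step:disconnected}: repeatedly apply Step~\ref{step:deletevertex} to force into the solution any vertex with many disjoint paths to pairwise non-equivalent terminals, then apply Steps~\ref{step:equivalenceclass} and~\ref{step:disconnected} to either reject, or split into $\Oh(kln)$ subinstances each of which is connected and has at most $(k+l)(k+l+1)$ equivalence classes. This reduces \mmcu\ to polynomially many connected instances with a bounded number of equivalence classes. Then, as noted right after the definition of \bmmcu, each such \mmcu\ instance reduces to a \bmmcu\ instance with $T_b = \emptyset$, and recovering a \mmcu\ solution from the \bmmcu\ output for $\PP = (\emptyset,\emptyset,\R,k,l)$ is immediate.

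Next I would describe the algorithm for \bmmcu\ itself: it is the recursion of Step~\ref{step:recursive} combined with the high-connectivity phase of Steps~\ref{step:terminalreduction}--\ref{step:finalstep}. Concretely, given a \bmmcu\ instance, run Step~\ref{step:recursive}: search for a $(q,k+l)$-good node separation via Lemma~\ref{lemma:nodesep} or a $(q,k+l)$-flower separation via Lemma~\ref{lemma:flowersep}; if one exists, recurse on the smaller bordered instance $\I_b^*$ of Lemma~\ref{lemma:recursive}, bypass all irrelevant vertices of $V^* \setminus (T \cup U(\I_b^*))$, clean up terminals via rules 1--3 (justified by Lemmas~\ref{lemma:certainedge},~\ref{lemma:identify},~\ref{lemma:deleteterminals}), and restart on the strictly smaller instance; if neither separation exists, fall through to the high-connectivity phase. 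In the high-connectivity phase, apply Step~\ref{step:terminalreduction} to bound the number of terminals and multiplicities, iterate over all $\PP \in \mathbb{P}(\I_b)$ (a factor $2^{\Oh((k+l)\log(k+l))}$), compute the interrogating family $\F$ of Lemma~\ref{lemma:setfamilyinterrogates} and branch over $S \in \F$ (a factor $2^{\Oh((k+l)^3\log(k+l))}$), and finally run Step~\ref{step:finalstep}, which by Lemma~\ref{lemma:highconnectivity} enumerates $X \subseteq X_b \cup N_G(S(T^\bigs))$ over the $\Oh((k+l)^2)$ choices of $T^\bigs$ and sets $F = A_{G,X}(S(T^\bigs))$, testing each candidate $(X,F)$ and keeping a minimal solution.

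Correctness is then a matter of stitching together the per-step correctness lemmas already proved: Lemma~\ref{lemma:equivalenceclassbound} for Step~\ref{step:equivalenceclass}, the easy arguments for Steps~\ref{step:disconnected}, the unnumbered lemma after Step~\ref{step:recursive} for the recursion (which itself rests on Lemma~\ref{lemma:recursive} and Lemma~\ref{lemma:bypass}), Lemma~\ref{lemma:setfamilyinterrogates} and Lemma~\ref{lemma:highconnectivity} for the high-connectivity branch, and Lemma~\ref{lemma:highconnected} for the bound on the number of large components that makes $\bigs(\X)$ and the whole interrogation scheme meaningful. For the running time I would collect the bounds already stated: Steps~\ref{step:deletevertex}--\ref{step:disconnected} cost $\Oh(k(k+l)n^3)$ and produce $\Oh(kln)$ subinstances; Step~\ref{step:recursive} satisfies the displayed recurrence, whose solution the paper records as $\Oh(2^{\Oh((k+l)^2\log(k+l))} n^4\log n)$; and the high-connectivity phase (Steps~\ref{step:terminalreduction}--\ref{step:finalstep}) costs $\Oh(2^{\Oh((k+l)^3\log(k+l))} n^2\log n)$. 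Since the high-connectivity phase is the leaf work of the Step~\ref{step:recursive} recursion and the recursion tree has polynomially many nodes, the total is dominated by the $n^4\log n$ term times the $2^{\Oh((k+l)^3\log(k+l))}$ factor, giving $\Oh(2^{\Oh((k+l)^3\log(k+l))} n^4\log n)$, as claimed.

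The main obstacle I expect is not any single new argument but rather verifying that the pieces compose without loss: in particular that the recursion of Step~\ref{step:recursive} terminates, i.e. that each restart strictly decreases $|V(G)|$ (the bypassing of at least one vertex of $V^* \setminus U(\I_b^*)$, nonempty because $|V^*\setminus V^\infty| > q$ while $V^*$ retains at most $q$ non-terminals after the update, together with the strict-subset property of good/flower separations), and that the quantities $q$ and $q'$ remain $2^{\Oh((k+l)^{\Oh(1)}\log(k+l))}$ throughout so that the parameter dependence does not blow up across recursion levels. I would also double-check that the depth of the recursion is $\Oh(n)$ and each node does $\poly(n)$ work beyond its children, so that solving the stated recurrence indeed yields the $n^4\log n$ factor rather than something larger; this is the bookkeeping-heavy part of the argument, but it follows the template of~\cite{CCHPP12} closely.
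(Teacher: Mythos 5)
Your proposal is correct and takes essentially the same approach as the paper: the theorem is really just the assembly of Steps 1--7 with the per-step correctness lemmas and running-time bounds that the paper has already established, and your reconstruction tracks that assembly faithfully, including the reduction to \bmmcu\ with $T_b=\emptyset$, the recursive-understanding recurrence for Step~\ref{step:recursive}, and the $2^{\Oh((k+l)^3\log(k+l))}$ factor coming from the interrogation family in the high-connectivity phase at the leaves. The minor informalities you flag (why the recursion tree has polynomially many nodes, and why the leaf-level high-connectivity cost folds into the $n^4\log n$ term) are at the same level of detail as the paper itself leaves them.
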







\bibliographystyle{abbrv}
\bibliography{references}

\begin{thebibliography}{10}

\bibitem{AAC07}
A.~Agarwal, N.~Alon, and M.~Charikar.
\newblock Improved approximation for directed cut problems.
\newblock In {\em Proceedings of the 39th Annual {ACM} Symposium on Theory of
  Computing, San Diego, California, USA, June 11-13, 2007}, pages 671--680,
  2007.

\bibitem{AS14}
N.~Apollonio and B.~Simeone.
\newblock The maximum vertex coverage problem on bipartite graphs.
\newblock {\em Discrete Applied Mathematics}, 165:37--48, 2014.

\bibitem{BDT11}
N.~Bousquet, J.~Daligault, and S.~Thomass{\'{e}}.
\newblock Multicut is {FPT}.
\newblock In {\em Proceedings of the 43rd {ACM} Symposium on Theory of
  Computing, {STOC} 2011, San Jose, CA, USA, 6-8 June 2011}, pages 459--468,
  2011.

\bibitem{CLL09}
J.~Chen, Y.~Liu, and S.~Lu.
\newblock An improved parameterized algorithm for the minimum node multiway cut
  problem.
\newblock {\em Algorithmica}, 55(1):1--13, 2009.

\bibitem{ChenLLOR08}
J.~Chen, Y.~Liu, S.~Lu, B.~O'Sullivan, and I.~Razgon.
\newblock A fixed-parameter algorithm for the directed feedback vertex set
  problem.
\newblock {\em J. {ACM}}, 55(5), 2008.

\bibitem{CCHPP12}
R.~H. Chitnis, M.~Cygan, M.~Hajiaghayi, M.~Pilipczuk, and M.~Pilipczuk.
\newblock Designing {FPT} algorithms for cut problems using randomized
  contractions.
\newblock In {\em 53rd Annual {IEEE} Symposium on Foundations of Computer
  Science, {FOCS} 2012, New Brunswick, NJ, USA, October 20-23, 2012}, pages
  460--469, 2012.

\bibitem{CHM13}
R.~H. Chitnis, M.~Hajiaghayi, and D.~Marx.
\newblock Fixed-parameter tractability of directed multiway cut parameterized
  by the size of the cutset.
\newblock {\em {SIAM} J. Comput.}, 42(4):1674--1696, 2013.

\bibitem{ParameterizedComplexityBook}
R.~G. Downey and M.~R. Fellows.
\newblock {\em Fundamentals of Parameterized Complexity}.
\newblock Texts in Computer Science. Springer, 2013.

\bibitem{ENSS98}
G.~Even, J.~Naor, B.~Schieber, and M.~Sudan.
\newblock Approximating minimum feedback sets and multicuts in directed graphs.
\newblock {\em Algorithmica}, 20(2):151--174, 1998.

\bibitem{FG06}
J.~Flum and M.~Grohe.
\newblock {\em Parameterized Complexity Theory}.
\newblock Texts in Theoretical Computer Science. An EATCS Series.
  Springer-Verlag, Berlin, 2006.

\bibitem{GVY96}
N.~Garg, V.~V. Vazirani, and M.~Yannakakis.
\newblock Approximate max-flow min-(multi)cut theorems and their applications.
\newblock {\em {SIAM} J. Comput.}, 25(2):235--251, 1996.

\bibitem{GVY04}
N.~Garg, V.~V. Vazirani, and M.~Yannakakis.
\newblock Multiway cuts in node weighted graphs.
\newblock {\em J. Algorithms}, 50(1):49--61, 2004.

\bibitem{JV12}
G.~Joret and A.~Vetta.
\newblock Reducing the rank of a matroid.
\newblock {\em CoRR}, abs/1211.4853, 2012.

\bibitem{KKSTY04}
D.~R. Karger, P.~N. Klein, C.~Stein, M.~Thorup, and N.~E. Young.
\newblock Rounding algorithms for a geometric embedding of minimum multiway
  cut.
\newblock {\em Math. Oper. Res.}, 29(3):436--461, 2004.

\bibitem{KT11}
K.~Kawarabayashi and M.~Thorup.
\newblock The minimum k-way cut of bounded size is fixed-parameter tractable.
\newblock In {\em {IEEE} 52nd Annual Symposium on Foundations of Computer
  Science, {FOCS} 2011, Palm Springs, CA, USA, October 22-25, 2011}, pages
  160--169, 2011.

\bibitem{KPPW12}
S.~Kratsch, M.~Pilipczuk, M.~Pilipczuk, and M.~Wahlstr{\"{o}}m.
\newblock Fixed-parameter tractability of multicut in directed acyclic graphs.
\newblock In {\em Automata, Languages, and Programming - 39th International
  Colloquium, {ICALP} 2012, Warwick, UK, July 9-13, 2012, Proceedings, Part
  {I}}, pages 581--593, 2012.

\bibitem{Marx06}
D.~Marx.
\newblock Parameterized graph separation problems.
\newblock {\em Theor. Comput. Sci.}, 351(3):394--406, 2006.

\bibitem{MOR13}
D.~Marx, B.~O'Sullivan, and I.~Razgon.
\newblock Finding small separators in linear time via treewidth reduction.
\newblock {\em {ACM} Transactions on Algorithms}, 9(4):30, 2013.

\bibitem{MR14}
D.~Marx and I.~Razgon.
\newblock Fixed-parameter tractability of multicut parameterized by the size of
  the cutset.
\newblock {\em {SIAM} J. Comput.}, 43(2):355--388, 2014.

\bibitem{RN}
R.~Niedermeier.
\newblock {\em Invitation to Fixed Parameter Algorithms (Oxford Lecture Series
  in Mathematics and Its Applications)}.
\newblock {Oxford University Press, USA}, March 2006.

\bibitem{RaviS02}
R.~Ravi and A.~S. II.
\newblock Approximating k-cuts via network strength.
\newblock In {\em Proceedings of the Thirteenth Annual {ACM-SIAM} Symposium on
  Discrete Algorithms, January 6-8, 2002, San Francisco, CA, {USA.}}, pages
  621--622, 2002.

\bibitem{ReedSV04}
B.~A. Reed, K.~Smith, and A.~Vetta.
\newblock Finding odd cycle transversals.
\newblock {\em Oper. Res. Lett.}, 32(4):299--301, 2004.

\end{thebibliography}

\end{document}